\PassOptionsToPackage{unicode}{hyperref}
\PassOptionsToPackage{hyphens}{url}
\PassOptionsToPackage{dvipsnames,svgnames,x11names}{xcolor}
\documentclass[
  12pt]{article}

\usepackage{amsmath,amssymb}
\usepackage{amsthm}
\usepackage{array}
\usepackage{subcaption}
\usepackage{changepage}
\usepackage{makecell}
\usepackage{rotating}
\newcommand{\mathlarger}[1]{\mbox{\large$#1$}}
\usepackage{multirow}
\newtheorem{theorem}{Theorem}

\usepackage{iftex}
\ifPDFTeX
  \usepackage[T1]{fontenc}
  \usepackage[utf8]{inputenc}
  \usepackage{textcomp} 
\else 
  \usepackage{unicode-math}
  \defaultfontfeatures{Scale=MatchLowercase}
  \defaultfontfeatures[\rmfamily]{Ligatures=TeX,Scale=1}
\fi
\usepackage{lmodern}
\ifPDFTeX\else  
\fi
\IfFileExists{upquote.sty}{\usepackage{upquote}}{}
\IfFileExists{microtype.sty}{
  \usepackage[]{microtype}
  \UseMicrotypeSet[protrusion]{basicmath} 
}{}
\makeatletter
\@ifundefined{KOMAClassName}{
  \IfFileExists{parskip.sty}{%
    \usepackage{parskip}
  }{
    \setlength{\parindent}{0pt}
    \setlength{\parskip}{6pt plus 2pt minus 1pt}}
}{
  \KOMAoptions{parskip=half}}
\makeatother
\usepackage{xcolor}
\makeatletter
\ifx\paragraph\undefined\else
  \let\oldparagraph\paragraph
  \renewcommand{\paragraph}{
    \@ifstar
      \xxxParagraphStar
      \xxxParagraphNoStar
  }
  \newcommand{\xxxParagraphStar}[1]{\oldparagraph*{#1}\mbox{}}
  \newcommand{\xxxParagraphNoStar}[1]{\oldparagraph{#1}\mbox{}}
\fi
\ifx\subparagraph\undefined\else
  \let\oldsubparagraph\subparagraph
  \renewcommand{\subparagraph}{
    \@ifstar
      \xxxSubParagraphStar
      \xxxSubParagraphNoStar
  }
  \newcommand{\xxxSubParagraphStar}[1]{\oldsubparagraph*{#1}\mbox{}}
  \newcommand{\xxxSubParagraphNoStar}[1]{\oldsubparagraph{#1}\mbox{}}
\fi
\makeatother

\usepackage{longtable,booktabs,array}
\usepackage{calc} 
\usepackage{etoolbox}
\makeatletter
\patchcmd\longtable{\par}{\if@noskipsec\mbox{}\fi\par}{}{}
\makeatother
\IfFileExists{footnotehyper.sty}{\usepackage{footnotehyper}}{\usepackage{footnote}}
\makesavenoteenv{longtable}
\usepackage{graphicx}
\makeatletter
\def\maxwidth{\ifdim\Gin@nat@width>\linewidth\linewidth\else\Gin@nat@width\fi}
\def\maxheight{\ifdim\Gin@nat@height>\textheight\textheight\else\Gin@nat@height\fi}
\makeatother
\setkeys{Gin}{width=\maxwidth,height=\maxheight,keepaspectratio}
\makeatletter
\def\fps@figure{htbp}
\makeatother

\makeatletter
\@ifpackageloaded{caption}{}{\usepackage{caption}}
\AtBeginDocument{%
\ifdefined\contentsname
  \renewcommand*\contentsname{Table of contents}
\else
  \newcommand\contentsname{Table of contents}
\fi
\ifdefined\listfigurename
  \renewcommand*\listfigurename{List of Figures}
\else
  \newcommand\listfigurename{List of Figures}
\fi
\ifdefined\listtablename
  \renewcommand*\listtablename{List of Tables}
\else
  \newcommand\listtablename{List of Tables}
\fi
\ifdefined\figurename
  \renewcommand*\figurename{Figure}
\else
  \newcommand\figurename{Figure}
\fi
\ifdefined\tablename
  \renewcommand*\tablename{Table}
\else
  \newcommand\tablename{Table}
\fi
}
\@ifpackageloaded{float}{}{\usepackage{float}}
\floatstyle{ruled}
\@ifundefined{c@chapter}{\newfloat{codelisting}{h}{lop}}{\newfloat{codelisting}{h}{lop}[chapter]}
\floatname{codelisting}{Listing}

\makeatother
\makeatletter
\@ifpackageloaded{caption}{}{\usepackage{caption}}
\@ifpackageloaded{subcaption}{}{\usepackage{subcaption}}
\makeatother

\ifLuaTeX
  \usepackage{selnolig}  
\fi
\usepackage[]{natbib}
\usepackage{bookmark}

\IfFileExists{xurl.sty}{\usepackage{xurl}}{} 
\hypersetup{
  pdftitle={Title},
  pdfauthor={Author 1; Author 2},
  pdfkeywords={3 to 6 keywords, that do not appear in the title},
  colorlinks=true,
  linkcolor={blue},
  filecolor={Maroon},
  citecolor={Blue},
  urlcolor={Blue},
  pdfcreator={LaTeX via pandoc}}

\newcommand{\anon}{1}

\begin{document}

\def\spacingset#1{\renewcommand{\baselinestretch}%
{#1}\small\normalsize} \spacingset{1}


\if1\anon
{
  \title{\bf Factor State Space Modelling of the Ornstein-Uhlenbeck Process with Measurement Error and its Application}
  \author{Shanglun Li, Toby Kenney\thanks{
    Corresponding author Toby Kenney, Department of Mathematics and Statistics, Dalhousie University, Halifax, NS, Canada. tkenney@mathstat.dal.ca}\hspace{.2cm}, and Hong Gu\\
    Department of Mathematics and Statistics, Dalhousie University}
  \maketitle
} \fi

\if0\anon
{
  \bigskip
  \bigskip
  \bigskip
  \begin{center}
    {\LARGE\bf Title}
\end{center}
  \medskip
} \fi

\bigskip
\begin{abstract}
Standard Ornstein-Uhlenbeck (OU) models often yield biased parameter estimates when measurement error is ignored. While the Ornstein-Uhlenbeck State Space Model (OUSSM) addresses this in univariate settings, multidimensional extensions remain limited. This paper introduces the factor OUSSM to model multi-dimensional, mean-reverting systems with observational noise. We resolve critical identifiability challenges in parameter estimation by establishing necessary constraints and validating the method through extensive simulations. We demonstrate the model's versatility by analyzing human gut microbiome dynamics and North Atlantic Sea Surface Temperature (SST) data. The results reveal distinct latent temporal structures in both biological and environmental systems, establishing the factor OUSSM as a robust framework for multivariate time series analysis.
\end{abstract}

\noindent%
{\it Keywords:} Ornstein Uhlenbeck process; High dimensional state space model; Dynamic factor models; Measurement error; Gut microbiome dynamics; Sea surface temperature dynamics.
\vfill

\newpage
\spacingset{1.8} 

\section{Introduction}
\label{sec1}
The Ornstein-Uhlenbeck (OU) process is widely used to model
mean-reverting processes with numerous real-world applications. For
instance, \citet{WANG2015363} and \citet{liang2011optimal} applied the
OU process in the financial domain, while \citet{yatabe2021ornstein}
used it to model human body weight fluctuations. Additionally,
\citet{kenney2020application} and \citet{wang2018stochastic} explored
its applications in human microbiome and epidemiological studies,
respectively. However, \citet{kenney2020application} noted that an OU
process not accounting for measurement error can lead to bias in the
estimation of parameters when such errors are present in the data. To
address this issue, a state space model can be used to incorporate
measurement error into the data analysis.

The OU process also shares theoretical connections with autoregressive
(AR) and autoregressive moving average (ARMA) models. The OU process
can be seen as a continuous-time counterpart to AR models, with the
mean-reversion rate corresponding to the decay parameter in AR
models. When incorporating measurement error, the equally-spaced
discrete-time sampling of an OU model with measurement error follows
an ARMA process, with the uncertainty over the true latent value of
the process giving rise to the moving average terms in the ARMA
model. These links underline the relevance of Ornstein-Uhlenbeck State
Space Models (OUSSMs) as a robust tool for handling noisy,
multivariate systems where traditional AR or ARMA models may fall
short. OUSSMs thus provide a flexible framework for analyzing complex
real-world processes with both latent dynamics and measurement
uncertainty.

The OUSSM, where the OU process is used to model the state equation, has been used in a variety of applications, including population abundance modeling~\citep{dennis2014density}, sea surface temperature prediction~\citep{tandeo2011linear}, and animal location tracking~\citep{johnson2008continuous}. In population ecology, \cite{dennis2014density} modeled the temporal dynamics of single-species abundance as a one-dimensional OU process, where the latent population size evolves continuously under mean reversion toward an equilibrium and the observation equation accounts for sampling noise. \cite{tandeo2011linear} extended the linear Gaussian state-space framework to analyze sea-surface temperature (SST) anomalies at a fixed spatial location, assuming a univariate continuous-time OU process for the latent temperature field and explicitly accommodating irregular satellite sampling intervals and measurement error through Kalman filtering and an EM algorithm. Similarly, \cite{johnson2008continuous} developed a continuous-time correlated random-walk model for animal telemetry data by integrating a univariate OU velocity process; the resulting model captures directional persistence in movement while estimating location uncertainty within a state-space framework. Despite their methodological differences, these models share a common one-dimensional structure--each tracks a single latent process per time series without modeling cross-series dependencies or shared latent variability. 

Many real-world systems, such as ecological communities or microbiome dynamics, involve multiple interacting components that evolve jointly over time. Indeed, recent studies of the human microbiome highlight that taxa do not vary independently but form complex temporal interaction networks~\citep{lugo2019dynamic}, and that community stability depends on the strength and structure of these interactions~\citep{he2022arzimm}. Capturing such dependencies requires extending the OUSSM to a multidimensional framework capable of representing inter-taxa coupling and shared latent drivers.

Multivariate (vector) Ornstein--Uhlenbeck (OU) processes have been employed in several disciplines to model systems of correlated stochastic dynamics. In phylogenetic comparative biology, multivariate OU models describe the joint evolution of multiple traits across a tree, allowing correlated adaptation and shared evolutionary optima among traits~\citep{bartoszek2012phylogenetic}. Similarly, in neuroscience, multivariate OU formulations have been used to infer directed effective connectivity among large-scale brain regions from fMRI time series~\citep{insabato2018bayesian}. However, these models typically treat the observed values as noise-free realizations of the underlying process or assume negligible measurement error. As a result, they do not explicitly account for the observational uncertainty that is pervasive in biological and ecological data.

This paper addresses these limitations by developing the factor OUSSM, a multidimensional extension of the OUSSM inspired by dynamic factor models in state-space analysis~\citep{durbin2012time}. The term ``factor'' reflects that the latent process is represented by a small number of underlying factors--analogous to principal components in factor analysis--that drive the dynamics of multiple observed time series while incorporating measurement error in the observation equation. This structure enables the model to capture shared stochastic variation, cross-process dependence, and noise explicitly within a likelihood-based framework.

The definition and development of factor OUSSM is given in Section~\ref{sec2}.
Section~\ref{sec4} 
presents simulation studies to assess the performance of the proposed
model in a wide variety of cases. In Section~\ref{sec5}, we apply this model to both the human gut
microbiome data of \citet{caporaso2011moving} and to sea-surface
temperature (SST) data, demonstrating its applicability to distinct
real-world systems. Finally, Section~\ref{sec6}
concludes the paper and discusses potential directions for future work.

\section{Factor OUSSM}
\label{sec2}
The Ornstein-Uhlenbeck (OU) process is a stochastic differential
equation (SDE) model widely employed to model the dynamic behavior of
a continuous-time stochastic process. It satisfies the following
linear stochastic differential equation:
$$dx_t =-\theta(x_t -\mu)dt+\sigma dW_t$$
Here, $\{W_t: t \geq 0\}$ is
a Brownian motion term. $\mu$ represents the mean value towards which
the process tends to revert, $\theta$ represents the rate of mean
reversion, and $\sigma^2$ denotes the diffusion variance (See
\citet{finch2004ornstein}, for example). While the SDE can be solved for
$\theta < 0$, the parameter $\theta$ is usually constrained to be
positive, which ensures that the system has a stationary
state.

This stochastic differential equation is solvable, and the solution is
given by:
$$x_{t_{n+1}} = e^{-\theta \Delta t_{n+1}} x_{t_n} + (1-e^{-\theta\Delta
  t_{n+1}})\mu + \eta_{t_{n+1}}, \eta_{t_{n+1}} \sim N\left(0, Q_{t_{n+1}}\right)$$
where $\Delta t_{n+1} = t_{n+1}-t_n$ and $\eta_{t_{n+1}}$ represents a
Gaussian noise term with variance $Q_{t_{n+1}}=
\dfrac{\sigma^2}{2\theta}\left(1-e^{-2\theta\Delta t_{n+1}}\right)$. This
form makes it easier to use the OU process for modelling data sampled
at discrete timepoints. The stationary distribution is Gaussian with
mean $\mu$ and variance $\dfrac{\sigma^2}{2\theta}$.

The multidimensional Ornstein-Uhlenbeck (OU) process is a trivial
extension of the one-dimensional OU process into multiple
dimensions. This process is often used to model vector-valued data
with dependencies across dimensions and mean-reverting behavior in
each dimension. The continuous-time model for the multidimensional OU
process is governed by the following stochastic differential equation
(SDE):
$$dx_t = -\Theta(x_t-\mu)d_t+\Sigma^{1/2} dW_t$$ where $x_t$ is an
$m$-dimensional state vector at time t, $\Theta$ is an $m\times m$
mean-reversion matrix, $\mu$ is an $m$-dimensional mean vector to
which $x_t$ reverts, $\Sigma$ is a symmetric $m\times m$ diffusion matrix, and
$W_t$ is an $m$-dimensional Brownian motion. We see that this is the
same SDE as the OU process, but with parameters replaced by vectors
and matrices as appropriate. The constraint $\theta >0$ in the
1-dimensional case is replaced by the constraint that the eigenvalues
of the matrix $\Theta$ should have positive real part.

Given a time step $\Delta{t_n}$, the solution to the multidimensional OU
process is (See, for example, \citet{gardiner2009handbook}):
$$x_{t_{n+1}} = e^{-\Theta \Delta t_{n+1}} x_{t_n} + (I-e^{-\Theta\Delta
  t_{n+1}})\mu + \eta_{t_{n+1}}, \eta_{t_{n+1}} \sim N\left(0, Q_{t_{n+1}}\right)$$
where $e^{-\Theta \Delta t_{n+1}}$ is the matrix exponential of $-\Theta
\Delta t_{n+1}$, which determines how quickly the process reverts to the
mean $\mu$, $\eta_{t_{n+1}}$ is a normally distributed noise term with
mean zero and covariance matrix $Q_{t_{n+1}}$, where $Q_{t_{n+1}}$ is the
solution to $\Theta Q_{t_{n+1}} + Q_{t_{n+1}}\Theta^T = \Sigma - e^{-\Theta
  \Delta t_{n+1}}\Sigma e^{-\Theta^T \Delta t_{n+1}}$. The stationary distribution
is Gaussian with mean $\mu$ and covariance matrix $Q_{t_\infty}$
satisfying $\Theta Q_{t_\infty} + Q_{t_\infty}\Theta^T = \Sigma$.

The dynamics of this system depend on the eigenvalues of
$\Theta$. Assuming the eigenvalues are distinct, if they are real,
then $\Theta$ is diagonalizable over the reals, and changing to the
basis for which $\Theta$ is diagonal, we can view the system as a
number of OU processes with correlated diffusion. If $\Theta$ has
complex eigenvalues, then it is not diagonalizable over the reals. If
the eigenvalues are distinct, then the matrices are
block-diagonalizable into $2\times 2$ or $1\times 1$ blocks over the
reals. The $2\times 2$ blocks represent a decaying cyclic trend
between two systems, i.e. a spiral pattern, while the $1\times 1$
blocks represent separate decaying systems. The diffusion processes of
all these may be correlated. Figures~\ref{realpic} and~\ref{complex}
shows examples of the two types of dynamics in two dimensions.

\begin{figure}[h]
\centering
\begin{subfigure}{\textwidth}
\centering
\includegraphics[width = 0.4\textwidth]{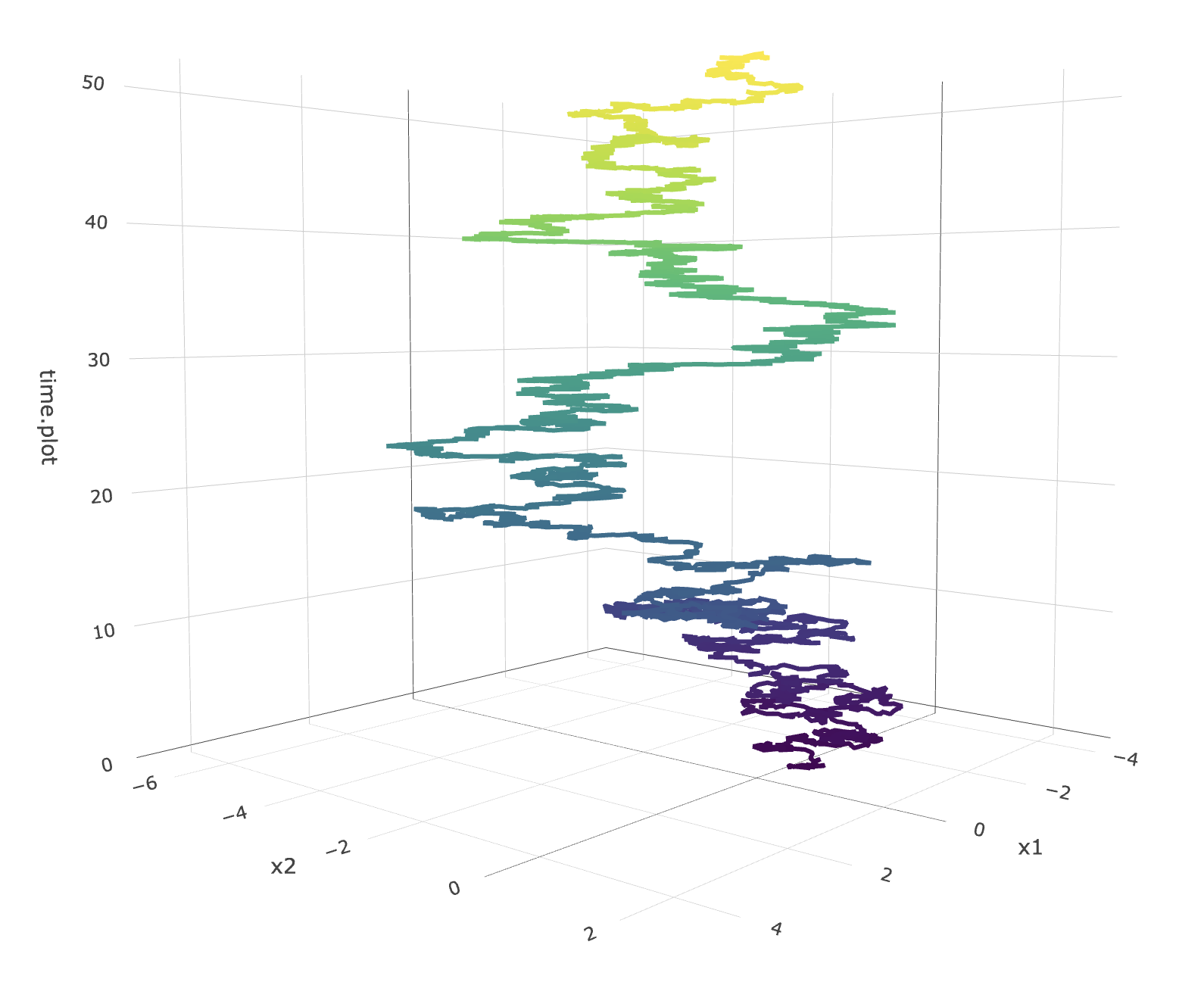	}
\includegraphics[width = 0.4\textwidth]{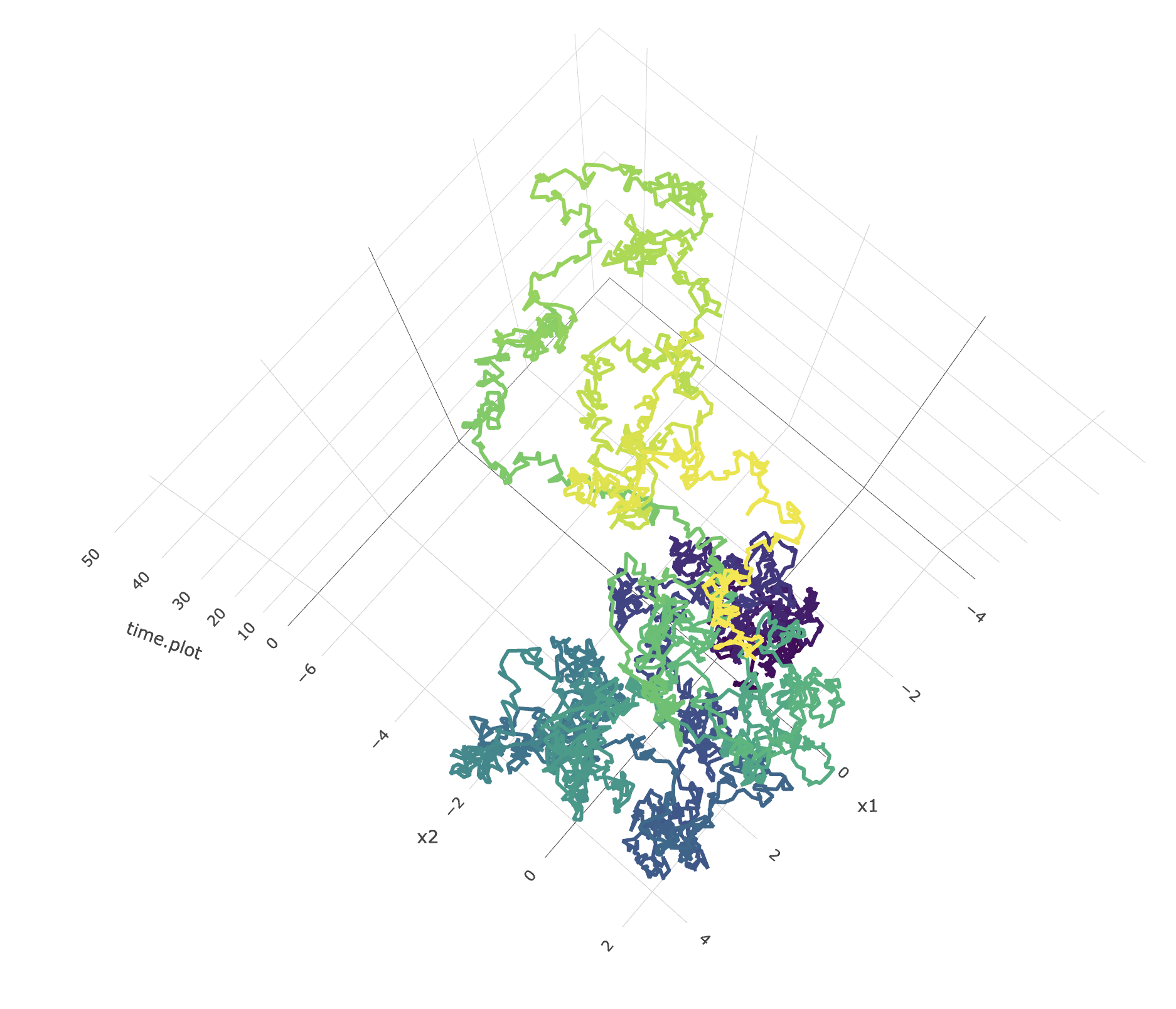	}
\caption{Real Eigenvalues}\label{realpic}

\end{subfigure}

\begin{subfigure}{\textwidth}
\centering
\includegraphics[width = 0.4\textwidth]{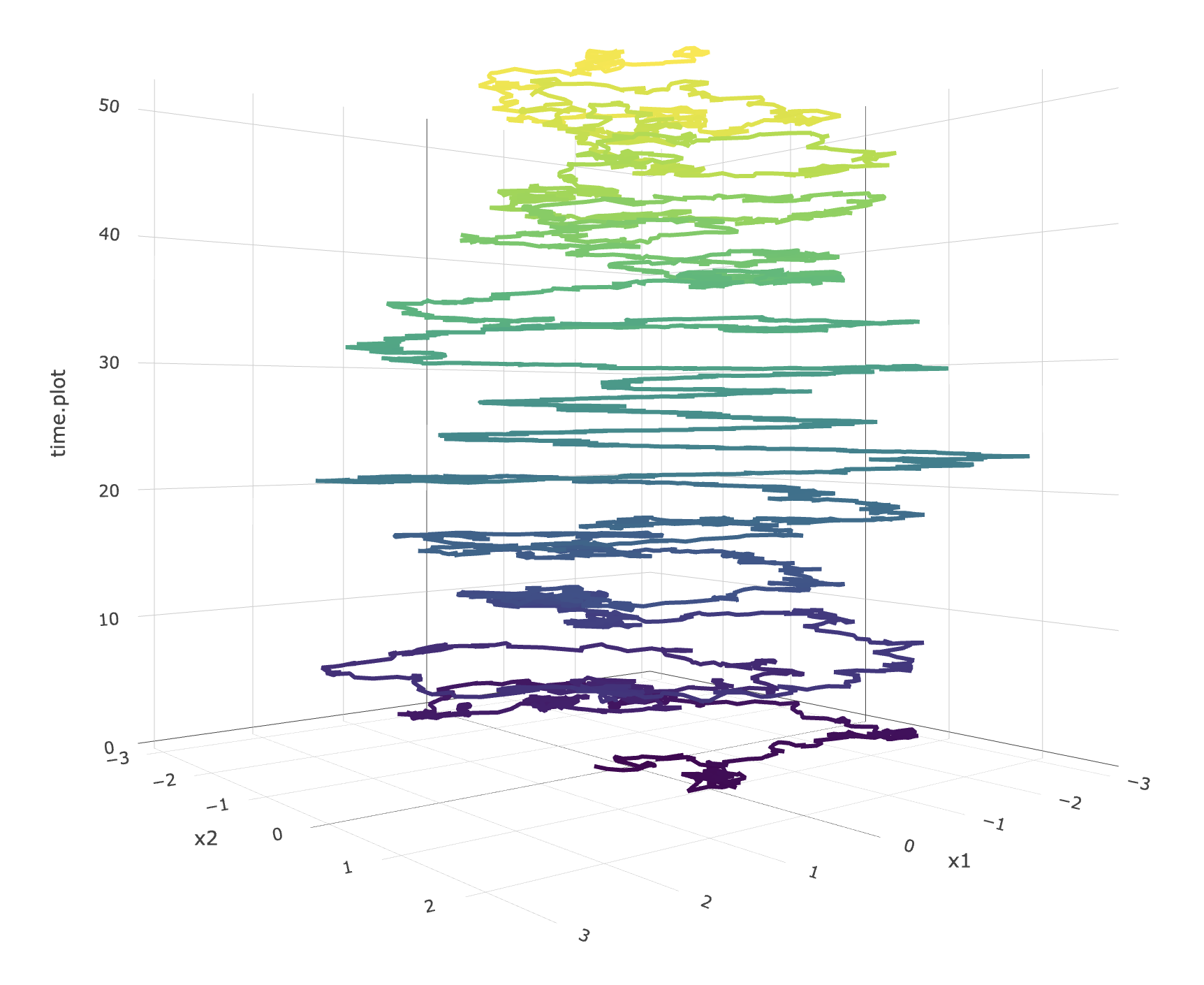	}
\includegraphics[width = 0.4\textwidth]{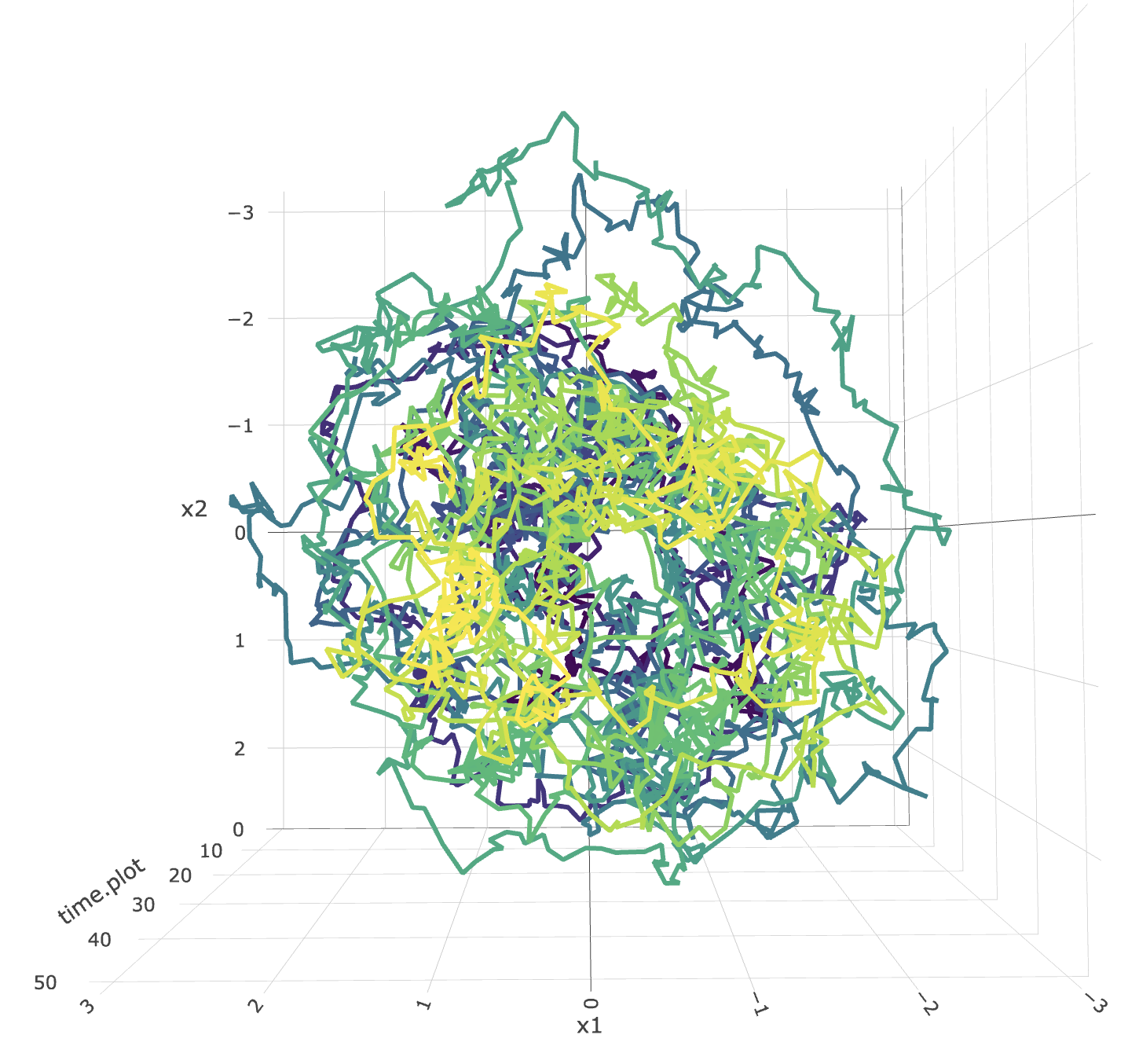	}
\caption{Complex eigenvalues}\label{complex}
\end{subfigure}

\caption{The OUSSM dynamics in two dimensional space when $\Theta$ has
  two real eigenvalues or two complex eigenvalues.}
\end{figure}

The factor Ornstein-Uhlenbeck State Space Model (OUSSM) is a
multidimensional state space model with a multidimensional OU process
as the state equation. It not only effectively handles the issue of
unaccounted measurement error in the OU process, but also accounts for
multidimensional interactions. The state is linked to the observed
data through a loading matrix. Solving the OU equation for the
discrete timepoints, we get the following:
\begin{align}
y_{t_{n}} &= \mu + Z x_{t_{n}} + \varepsilon, \varepsilon \sim N(0,H)\nonumber \\
x_{t_{n+1}} &=  e^{-\Theta \Delta t_{n+1}} x_{t_n} + \eta_{t_{n+1}}, \eta_{t_{n+1}} \sim N\left(0, Q_{t_{n+1}}\right) \tag{1} \label{eq1}
\end{align}
Within this model, the observed vector-valued variable at time $t_n$
is represented by $y_{t_n}$. The term $\varepsilon$ represents the
measurement error, assumed to follow a multivariate normal
distribution with mean zero and covariance matrix $H$. The underlying
vector-valued latent state variable is denoted by $x_{t_n}$; $\Delta
t_{n+1} = t_{n+1} - t_n$ is the time gap between two successive time
points. At the initial time point, $t_0$, the state vector $x_{t_0}$
follows a multivariate normal distribution with mean $a_{t_0}$ and
covariance matrix $P_{t_0}$. Notably, we have moved the mean value
towards which the process tends to revert ($\mu$) to the observation
equation, simplifying the system. Consequently, the state equation is
assumed to exhibit mean-reverting behavior around zero. The term
$\eta_{t_{n+1}}$ is the diffusion error term distributed as a multivariate
normal distribution with mean 0 and covariance matrix $Q_{t_{n+1}}$, where
$Q_{t_{n+1}}$ satisfies $\Theta Q_{t_{n+1}} + Q_{t_{n+1}}\Theta^T = \Sigma -
e^{-\Theta \Delta t_{n+1}}\Sigma e^{-\Theta^T \Delta
  t_{n+1}}$. Table~\ref{parameter} shows the dimension of each variable
where $m$ is the dimension of the latent OU process and $p$ is the
dimension of the observed data.

\begin{table}[h]
\centering
\caption{Parameter dimensions of factor OUSSM model}
\label{parameter}
\begin{tabular}{ |c|c|c|c|  }
 \hline
 \multicolumn{2}{|c|}{Vector} &  \multicolumn{2}{|c|}{Matrix}\\\hline
$y_{t_n}$ & $p \times 1$ & $Z$ & $p \times m$\\ 
$x_{t_n}$ & $m \times 1$ & $\Theta$ & $m \times m$\\ 
$\mu$ & $p \times 1$ & $\Sigma$ & $m \times m$\\ 
$\varepsilon_{t_n}$ & $p \times 1$ & $H$ & $p \times p$\\ 
$\eta_{t_n}$ & $m \times 1$ & $Q_{t_n}$ & $m \times m$ \\ 
$a_{t_0}$ & $m \times 1$ & $P_{t_0}$ & $m \times m$\\ \hline
\end{tabular}
\end{table}

In the factor OUSSM model, because the latent state can be expressed
in an arbitrary basis, if no constraints are imposed on the parameters
during estimation, identifiability issues arise. To prevent this, it
is necessary to introduce appropriate constraints on the parameters.

\subsection{Model Identifiability}
\label{sec3}
In the factor OUSSM model, for an invertible matrix $A$, if we let $\tilde x = Ax$ and $\tilde Z = ZA^{-1}$, it is easy to see the following output equations are equivalent: 
\begin{eqnarray*}
y_{t_{n}} &=& \mu + Zx_{t_{n}} + \varepsilon_{t_n}\\
 &=& \mu + ZA^{-1} A x_{t_{n}} + \varepsilon_{t_n}\\
 &=&  \mu + \tilde Z \tilde x_{t_{n}} + \varepsilon_{t_n}
\end{eqnarray*}
Then, if we look at the state equation:
\begin{align}
d\tilde x &= Adx = -A\Theta A^{-1}A x dt + A\Sigma^{1/2}dW_t \nonumber \\
&= -A\Theta A^{-1}\tilde x dt + A\Sigma^{1/2}dW_t \tag{2} \label{eq2}
\end{align}
Therefore, any choice of invertible $A$ will not change the model,
which causes an identifiability problem. We want to add constraints on
$\Theta$ and $\Sigma$ so that we have unique solutions and at the same
time the problem is simplified.

\begin{theorem}\label{identifiability_theorem}
  For any real matrices $\Theta$ and $\Sigma$, where $\Sigma$ is
  symmetric and positive definite, there is an invertible real matrix
  $A$, such that $A\Sigma A^T = I$, and $\tilde \Theta = A \Theta
  A^{-1}$ is a sum of an antisymmetric matrix plus a diagonal matrix
  with decreasing diagonal elements. Furthermore, if the diagonal
  elements of $\tilde \Theta$ are distinct, then $\tilde \Theta$ is
  unique up to conjugation by a diagonal matrix whose square is the
  identity matrix. In this case, the matrix $A$ is also unique up to
  left multiplication by a diagonal matrix with square the identity.
\end{theorem}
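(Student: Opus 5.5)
First reduce to the case $\Sigma = I$. Since $\Sigma$ is symmetric positive definite, it has a symmetric positive definite square root $\Sigma^{1/2}$. Put $A_0 = \Sigma^{-1/2}$, so that $A_0\Sigma A_0^T = I$. Every other matrix $A$ with $A\Sigma A^T = I$ has the form $A = O A_0$ with $O$ orthogonal, because $(AA_0^{-1})(AA_0^{-1})^T = A\Sigma A^T = I$. Set $\Theta_0 = A_0\Theta A_0^{-1}$. The existence part is then equivalent to finding an orthogonal $O$ such that $O\Theta_0 O^T$ equals an antisymmetric matrix plus a diagonal matrix with decreasing diagonal.

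Next, decompose $\Theta_0 = S + K$, where $S = (\Theta_0+\Theta_0^T)/2$ is symmetric and $K = (\Theta_0-\Theta_0^T)/2$ is antisymmetric. Conjugating by an orthogonal matrix preserves both symmetry and antisymmetry, so
\[
O\Theta_0O^T = OSO^T + OKO^T
\]
is again a symmetric-plus-antisymmetric decomposition, and by uniqueness it is \emph{the} decomposition of $O\Theta_0O^T$. By the spectral theorem there is an orthogonal $O$ with $OSO^T = D$ diagonal, and after permuting the eigenvectors the diagonal entries can be taken in decreasing order. Then $\tilde\Theta = D + OKO^T$ has the required form, and $A = OA_0$ is the required matrix.

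For uniqueness, take $A$ and $A'$ that both satisfy the conclusion, and write $A' = UA$. From $A'\Sigma A'^T = I = A\Sigma A^T$ we get that $U$ is orthogonal. Write $\tilde\Theta = D + K_1$, so that $\tilde\Theta' = U\tilde\Theta U^T = UDU^T + UK_1U^T$. By uniqueness of the symmetric/antisymmetric splitting, the diagonal part of $\tilde\Theta'$ is $D' = UDU^T$. Hence $D'$ and $D$ have the same eigenvalues, and since both are decreasingly ordered, $D' = D$. So $U$ commutes with $D$.

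When the diagonal entries of $D$ are distinct, the centraliser of $D$ consists of diagonal matrices. A diagonal orthogonal matrix has entries $\pm1$, so $U$ is diagonal with $U^2 = I$. This gives both claims at once: $\tilde\Theta' = U\tilde\Theta U$ is a conjugate of $\tilde\Theta$ by a diagonal matrix squaring to the identity, and $A' = UA$.

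The only step needing care is the reduction that every $A$ with $A\Sigma A^T = I$ differs from $A_0$ by an orthogonal factor. Together with the fact that the symmetric part is an orthogonal-conjugation invariant, this is what pins down $D$ as the ordered eigenvalues of the symmetric part of $\Sigma^{-1/2}\Theta\Sigma^{1/2}$. The rest is routine.
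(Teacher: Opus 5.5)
Your proposal is correct and follows essentially the same route as the paper. Both proofs whiten with $\Sigma^{-1/2}$ and orthogonally diagonalize the symmetric part; the paper diagonalizes $\Phi+\Phi^T$, which is twice your $S$. For uniqueness, both show that $A'A^{-1}$ is orthogonal, hence conjugates the diagonal parts into each other, hence fixes $D$, and so must be diagonal with entries $\pm 1$ when the diagonal entries are distinct. Your centralizer argument $UD=DU$ is just a compact version of the paper's eigenvector argument $Ce_j=\pm e_j$.
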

  
The proof of Theorem~\ref{identifiability_theorem} is in
Appendix~A. Based on this theorem, to ensure there are no
identifiability issues, we impose the constraint that $\Theta$ be the
sum of an antisymmetric matrix and a diagonal matrix with decreasing
diagonal elements and that the elements in the first row of $Z$ all be
non-negative. Additionally, we require $\Sigma$ to be the identity
matrix. In this paper, we assume that the measurement errors in each
dimension are independent, thus $H$ is modelled as a diagonal
matrix. 

In summary, under the constraints, the updated state equation is
\begin{align*}
y_{t_{n}} &= \mu + Z x_{t_{n}} + \varepsilon, \varepsilon \sim N(0,H) \\
x_{t_{n+1}} &=  e^{-\Theta \Delta t_{n+1}} x_{t_n} + \eta_{t_{n+1}}, \eta_{t_{n+1}} \sim N\left(0, Q_{t_{n+1}}\right) 
\end{align*}
Here, $Z$ is the loading matrix with $Z_{1j} > 0$ for
$j=1,\ldots,m$. The covariance matrix of the measurement error, $H$,
is diagonal. The covariance matrix of the diffusion error,
$\eta_{t_{n+1}}$, satisfies $\Theta Q_{t_{n+1}} + Q_{t_{n+1}}\Theta^T = I -
e^{-\Theta \Delta t_{n+1}} e^{-\Theta^T \Delta t_{n+1}}$. The matrix $\Theta$ is a
sum of an antisymmetric matrix plus a diagonal matrix with decreasing
diagonal elements.

\subsection{Parameter Interpretation}
\label{para_interpretation}

The factor OUSSM involves several parameter matrices---$\Theta$, $\Sigma$, $Z$, and $H$---each of which plays a distinct role in representing the latent dynamics and their relationship to the observed data. To facilitate interpretation and clarify the identifiability constraints, we briefly summarize the meaning of these parameters and the rationale for their specification, using microbial dynamics as an example.

\paragraph*{State representation} 
Here ${x}_t$ represent latent processes that drive the dynamics of multiple observed microbial taxa. The matrix $Z$ links these latent factors to the observed abundances through the observation equation
\[
{y}_t = {\mu} + Z {x}_t + {\varepsilon}_t, \qquad 
{\varepsilon}_t \sim {N}(0,H),
\]
where ${y}_t$ is $p$-dimensional, ${x}_t$ is $m$-dimensional, and $Z$ is a $p\times m$ loading matrix. Each column of $Z$ can therefore be viewed as a temporal factor loading that quantifies how strongly each taxon is influenced by the corresponding latent mean-reverting component of ${x}_t$. Increasing the dimension $m$ allows the model to represent more complex latent dynamics at the cost of additional parameters.

From a mathematical perspective, the latent state ${x}_t$ summarizes shared temporal patterns in the observed data, capturing mean-reverting fluctuations that underlie multiple taxa. Whether these latent components have a direct biological interpretation depends on the context of the study. In some cases, each latent dimension may correspond to an underlying physiological or ecological process--such as digestive function, immune activity, or environmental exposure--that jointly influences several microbial taxa. In other cases, the latent factors simply summarize correlated temporal variation among taxa, without implying specific biological mechanisms. Thus, the latent state can be interpreted either as a mathematical summary of shared dynamics or, when supported by external evidence, as a proxy for unobserved biological systems governing microbiome behavior.

\paragraph*{Role of $Z$ and model identifiability}
It is worth noting that, unlike a usual factor analysis, it is possible for the latent state to have higher dimension than the observed state. For example, a one-dimensional process could have latent mean reversion on multiple time-scales, represented in this model by $p=1,m>1$.

If $Z=I$ (when $m=p$), the latent states coincide directly with the observed trajectories, and the latent dynamics can be interpreted as the underlying dynamics for each observed variable. This formulation is simple but precludes dimension reduction, since a large number of observed time series cannot be represented by a smaller set of shared latent factors. By contrast, allowing a general $Z$ enables each observed series to be driven by multiple latent OU processes, providing a flexible low-rank representation that can capture correlated fluctuations across taxa. In this framework, $\Theta$ captures intrinsic mean-reversion structure, while $Z$ governs how these latent factors manifest in each observed taxon. This formulation yields clearer interpretability and greater flexibility for modeling complex multivariate microbiome dynamics, particularly when $\Theta$ is diagonalized.

\paragraph*{Diagonalization of $\Theta$}
The mean-reversion matrix $\Theta$ determines the temporal dependence structure among latent components. When $\Theta$ has distinct real eigenvalues, it can be diagonalized, and the latent processes correspond to multiple mean-reverting processes with rates given by the eigenvalues with correlated diffusion errors. When $\Theta$ has complex conjugate eigenvalues, it can be block-diagonalized into $2\times2$ or $1\times1$ real blocks, corresponding to oscillatory or cyclic behavior between pairs of latent components. For interpretability, we present results in the transformed basis where $\Theta$ is diagonal (or block-diagonal), so that each latent factor represents a dynamical component with its own characteristic timescale. This diagonalized representation simplifies the interpretation of microbial dynamics in the data application, where, for example, one latent component may correspond to a fast-fluctuating taxon group and another to a slow-reverting trend.

\paragraph*{Summary of interpretation}
The imposed constraints ensure both identifiability and interpretability: $Z$ quantifies how the latent OU factors contribute to each observed variable, $\Theta$ encodes the temporal coupling and mean-reversion strength among latent components, $\Sigma$ defines the diffusion scale of the latent process, and $H$ captures independent measurement errors.

\subsection{Parameter Estimation}
\label{estimate}
The most commonly employed method for parameter estimation in this context is the Kalman filter, a recursive algorithm that allows for
the iterative calculation of the likelihood (See, for example, \citet{durbin2012time}). Through the iterative update steps of the Kalman filter,
the model parameters can be estimated by maximizing the likelihood
function. The log-likelihood function for parameter estimation is
calculated as follows:
$$l(y_{t_1}, y_{t_2},\ldots,y_{t_n}|\Theta, \mu, H, Z) = -\dfrac{(N+1)p}{2}\log (2\pi) -\dfrac{1}{2}\sum_{n = 0}^N \left(\log|F_{t_n}| + v_{t_n}^TF_{t_n}^{-1}v_{t_n}\right)$$
where the terms in the equation are defined as follows:
\begin{align*}
v_{t_n}&=y_{t_n} -\mu- Za_{t_{n}}\\
F_{t_n} &= ZP_{t_n}Z^T + H\\
P_{t_{n+1}} &=  C_{t_{n+1}}P_{t_n}(C_{t_{n+1}} - K_{t_{n}}Z)^T + Q_{t_{n+1}}\\
K_{t_n} &= C_{t_{n+1}} P_{t_n}Z^TF_{t_n}^{-1}\\
a_{t_{n+1}} &=  C_{t_{n+1}} a_{t_n} + K_{t_n}v_{t_n}\\
C_{t_{n+1}} &= e^{-\Theta\Delta t_{n+1}}
\end{align*}
Here, we assume $a_{t_0}= 0$ and $P_{t_0} = Q_{t_\infty}$. To perform
parameter estimation, the Kalman Filter was implemented in R, and the
optim function was used with the L-BFGS-B method, using numerical
approximations for the derivatives, to maximize the log-likelihood
function.

To ensure that the real parts of the eigenvalues are positive, we
impose the additional constraint that the diagonal entries of $\Theta$
are positive. This constraint is sufficient to ensure that the
eigenvalues have positive real part, but is not necessary. Thus, we
are restricting the space of possible estimates by imposing this
condition. However, most matrices whose eigenvalues have positive
real parts satisfy this condition, so the restriction is not too
severe.

To estimate the constrained parameters in the model, we reparametrize
some parameters to remove the constraints and simplify the
optimization process. In more detail, we estimate the log-transformed
parameters for the diagonal elements of $H$ and $\Theta$. To ensure
the diagonal elements of $\Theta$ are in decreasing order, we
reparametrize the diagonal elements of $\Theta$ as a sequence of
cumulative sums, i.e. $\theta_k=(\sum_k^m  \alpha_i)$, where
$\alpha_i, i=1,\ldots,m$ are all positive, and we estimate the
log-transformed $\alpha_i$'s.

These reparametrizations effectively convert the constrained
optimization problem into an unconstrained one, facilitating efficient
estimation while ensuring adherence to the model's structural
constraints. 

The code is available as an R package at
https://github.com/ShanglunLi/OUSSM.

\subsection{Model Selection}
\label{model}
Selecting an appropriate model dimension is crucial for balancing the
flexibility to model the complexity of the real underlying process with
the ability to estimate model parameters from the available data. In
this study, we determine the optimal state equation dimension using
the Akaike Information Criterion (AIC) and the Bayesian Information
Criterion (BIC), which are widely used for model comparison. These
criteria penalize model complexity to prevent overfitting while
maximizing the likelihood of the observed data. For our model, $AIC =
-2\ln (L) + 2k$ and $BIC = -2\ln (L) + \ln (n)k$, where $k = p(m + 2)
+ m(m+1)/2$ is the number of parameters in the model, $L$ is the
maximized value of the likelihood function, and $n$ is the number of
observations in the dataset. To determine the optimal model, we
compute AIC and BIC for state-space models of varying dimensions and
select the model with the lowest AIC and BIC scores.

\section{Simulation Study}
\label{sec4}
In this section, we present simulation studies on the factor OUSSM. We
examine the scenario where the OU process in the state equation is
two-dimensional, with the output equation being two, three, or
four-dimensional.

In our simulations, we consider 15 different $\Theta$ matrices, giving
a range of values of the matrix properties that might be expected to
affect the dynamics and the difficulty of the estimation; these
include the magnitude of the diagonal elements of $\Theta$; the
difference between the diagonal elements and the squared difference
between the eigenvalues, $(\lambda_1 - \lambda_2)^2$. For each output
dimension, we generate two $Z$ matrices: one with a small angle
between the columns of $Z$ and one with an orthogonal angle between
the columns of $Z$. This results in a total of 90 scenarios. Table~\ref{Zsim} and Table~\ref{Thetasim} in Appendix~C summarize the
different $\Theta$ and $Z$ values used in the simulations. In the
simulation, we set the sequence length, $N$, to be 5000 with time points equally spaced with time gap $\Delta t = 1$ and simulate
100 replicate datasets for each combination of $\Theta$ and $Z$
values.

When assessing performance, it is important to remember that we made an arbitrary choice of constraints to resolve the identifiability issues. Our measure of estimation performance should not depend on this arbitrary choice. From~\eqref{eq2}, for a given true matrix $\Theta$, any of the matrices $A\Theta A^T$ for an orthogonal matrix $A$, would give the same equation, still with $\Sigma = I$. Therefore, a natural choice of measure for performance accuracy is the minimum Frobenius norm difference between matrices of the form $A\Theta A^T$ and $B\hat{\Theta}B^T$ for orthogonal matrices $A$ and $B$. Fortunately, for the two-dimensional simulations, this coincides with the Frobenius norm of the difference between the chosen representatives.

\begin{theorem}
  For two-dimensional space, if $\Theta$ and $\hat{\Theta}$ are sums
  of antisymmetric and diagonal matrices with decreasing diagonal
  elements, and with $\Theta_{12}\hat{\Theta}_{12}>0$, then
  $\mathlarger{\inf}_{\stackrel{\Theta_1 \in S_1}{\Theta_2\in S_2}}
  \left\{\left\lVert\Theta_1 -\Theta_2\right\rVert_F\right\} =
  \left\lVert\Theta -\hat\Theta\right\rVert_F$, where $S_1 = \{A\Theta
  A^{T}|AA^T =I\} $ and $S_2 = \{B\hat\Theta B^{T}|BB^T =I\} $.
\end{theorem}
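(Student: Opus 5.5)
The plan is to reduce the double infimum to a single infimum over one orthogonal matrix, and then split the Frobenius norm into symmetric and antisymmetric parts that can be minimized separately.

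\emph{Step 1 (reduction).} Since the Frobenius norm is invariant under orthogonal conjugation,
$\left\lVert A\Theta A^T - B\hat\Theta B^T\right\rVert_F = \left\lVert \Theta - R\hat\Theta R^T\right\rVert_F$ with $R = A^TB$, which is orthogonal. As $A,B$ range over all orthogonal matrices, $R$ ranges over all of $O(2)$. Hence the infimum equals $\inf_{R\in O(2)} \left\lVert \Theta - R\hat\Theta R^T\right\rVert_F$. Choosing $R=I$ shows it is at most $\left\lVert\Theta-\hat\Theta\right\rVert_F$, so it remains to prove the reverse inequality.

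\emph{Step 2 (splitting).} Write $\Theta = D + bJ$ and $\hat\Theta = \hat D + \hat b J$, where $J=\begin{pmatrix}0&1\\-1&0\end{pmatrix}$, $D=\mathrm{diag}(a,c)$ with $a\ge c$, $\hat D=\mathrm{diag}(\hat a,\hat c)$ with $\hat a\ge\hat c$, and $b\hat b>0$.

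Conjugation by $R$ preserves symmetry and antisymmetry. Symmetric and antisymmetric matrices are orthogonal in the trace inner product. Therefore
\[
\left\lVert \Theta - R\hat\Theta R^T\right\rVert_F^2 = \left\lVert D - R\hat D R^T\right\rVert_F^2 + \left\lVert bJ - \hat b\, RJR^T\right\rVert_F^2 .
\]
In two dimensions $RJR^T = \det(R)\,J$, so the second term equals $2\,(b-\det(R)\,\hat b)^2$.

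\emph{Step 3 (minimizing each term).} For the antisymmetric term, $b\hat b>0$ gives $2(b-\hat b)^2 \le 2(b+\hat b)^2$. So this term is minimized exactly when $\det R = 1$, and its minimal value is the one attained at $R=I$.

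For the symmetric term, expand
\[
\left\lVert D-R\hat DR^T\right\rVert_F^2 = \lVert D\rVert_F^2+\lVert\hat D\rVert_F^2-2\,\mathrm{tr}(DR\hat DR^T).
\]
For a rotation by angle $\varphi$, a direct computation gives
\[
\mathrm{tr}(DR\hat DR^T) = a\hat a + c\hat c - (a-c)(\hat a-\hat c)\sin^2\varphi \le a\hat a + c\hat c .
\]
The inequality holds because both diagonals are decreasing, so $(a-c)(\hat a-\hat c)\ge 0$. A reflection can be written as a rotation composed with $\mathrm{diag}(1,-1)$, which commutes with $\hat D$, so the same bound applies. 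Alternatively, the bound for all of $O(2)$ follows from the Hoffman--Wielandt or von Neumann trace inequality. Consequently $\left\lVert D-R\hat DR^T\right\rVert_F^2 \ge (a-\hat a)^2+(c-\hat c)^2$ for every $R\in O(2)$.

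\emph{Step 4 (conclusion).} Both lower bounds are attained simultaneously at $R=I$. Adding them gives $\left\lVert \Theta - R\hat\Theta R^T\right\rVert_F^2 \ge \left\lVert\Theta-\hat\Theta\right\rVert_F^2$ for all $R\in O(2)$, which together with Step 1 proves the claim.

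The only place needing care is Step 3. There one must check that the two terms do not compete, i.e.\ that a single $R$ minimizes both. The ordering $a\ge c$, $\hat a\ge\hat c$ handles the symmetric part, and the sign condition $\Theta_{12}\hat\Theta_{12}>0$ handles the antisymmetric part.
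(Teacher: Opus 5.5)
Your proof is correct, and it differs from the paper's in how the minimization is carried out. Both arguments begin with the same reduction to $\inf_{R\in O(2)}\lVert\Theta-R\hat\Theta R^T\rVert_F$. The paper then parametrizes $O(2)$ by an angle $\phi$, separately for rotations and for reflections, and writes $\operatorname{tr}(\Theta^T B\hat\Theta B^T)$ out explicitly. It differentiates in $\phi$, sets the derivative to zero to get $\sin\phi=0$ or $\cos\phi=0$, and compares the squared distances at the four resulting conjugates of $\hat\Theta$. The function the paper differentiates is exactly your $\operatorname{tr}(DR\hat DR^T)+2b\hat b\det R$; its $\pm 2\hat\theta_{12}\theta_{12}$ term is your $\det R=\pm1$. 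You do not locate critical points. Instead you use the Frobenius-orthogonality of symmetric and antisymmetric parts and the identity $RJR^T=\det(R)J$ to split the objective into two pieces, and bound each from below for every $R$.

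What your approach buys:
\begin{itemize}
\item You get a pointwise inequality rather than a critical-point enumeration. So you do not need the implicit step that the minimum over each compact component of $O(2)$ is attained at a critical point.
\item You need no separate treatment of the case $\theta_{11}=\theta_{22}$ or $\hat\theta_{11}=\hat\theta_{22}$, which the non-strict wording ``decreasing'' allows. There the paper's derivative vanishes identically, and its final comparison assumes strict inequalities.
\item It shows which hypothesis controls which piece: the diagonal ordering handles the symmetric part, and the sign condition $\Theta_{12}\hat\Theta_{12}>0$ handles the antisymmetric part.
\item It explains why the result is two-dimensional. Conjugation acts on the antisymmetric part only through $\det R$, so a single minimizer $R=I$ serves both pieces.
\end{itemize}
The paper's route is elementary calculus, but it hides this structure.
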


  From~\eqref{eq1}, $\exp(-\Theta)$ can be
  interpreted as autocorrelation of latent states at time step $\Delta
  t = 1$. Thus instead of calculating the absolute errors between the
  classes $\left\{A\Theta A^T\right\}$ and $\left\{B\hat\Theta
  B^T\right\}$ for any orthogonal matrices $A$ and $B$, we calculate
  the relative errors between two classes $\exp\left(-\left\{A\Theta
  A^T\right\}\right)$ and $\exp\left(-\left\{B\hat\Theta
  B^T\right\}\right)$.

  From the Cayley-Hamilton theorem, $X$ is a zero of its
  characteristic polynomial, thus for any two-dimensional matrix $X$,
  $X^2 = aX+bI$ for some constants $a$ and $b$. By recursion, it is
  easy to see $\exp(-X) = cX+dI$ for some constants $c$ and $d$. Thus,
  if $X$ is a sum of an antisymmetric matrix and a diagonal matrix, so
  is $\exp(-X)$. Furthermore, since $\exp\left(AX
  A^{-1}\right)=A\exp\left(X \right)A^{-1}$ for any matrix $X$, the
  minimum Frobenius norm between classes $\exp\left(-\left\{A\Theta
  A^T\right\}\right)$ and $\exp\left(-\left\{B\hat\Theta
  B^T\right\}\right)$ is the Frobenius norm of the error matrix, $E =
  \exp{(-\hat{\Theta})} - \exp{(-\Theta})$ (provided
  $\Theta_{12}\hat{\Theta}_{12}>0$, if not the minimum Frobenius norm
  is the Frobenius norm of $E' =
  \exp{(-\hat{\Theta}^T)} - \exp{(-\Theta})$). We define the Frobenius
  norm ratio of the error in the estimate
  $\exp\left(-\hat\Theta\right)$ relative to the truth as
$$\dfrac{\left\lVert
  E\right\rVert_F}{\left\lVert\exp{(-{\Theta})}\right\rVert_F} =\dfrac{\sqrt{\sum_{i,j}\left(E_{ij}\right)^2}}{\sqrt{\sum_{i,j}\left(\exp{(-{\Theta})}_{ij}\right)^2}}
$$

Figure~\ref{simu} presents the log-Frobenius norm ratio of the
estimation error versus the modulus of the larger eigenvalue,
$\operatorname{Mod}(\lambda_1)$, for different squared difference
between the eigenvalues, $(\lambda_1 - \lambda_2)^2$, dimension of
$y$ and whether the columns of $Z$ are orthogonal. Note that the
squared difference between the eigenvalues also indicates what sort of
dynamics occur in the system. If the eigenvalues are real, then
$\lambda_1-\lambda_ 2$ is real, so the square is positive, while if
$\lambda_1$ and $\lambda_2$ are complex, then they are conjugate, so
$\lambda_1-\lambda_ 2$ is imaginary, and the square is negative. From
Figure~\ref{simu}, we see that the most important factor influencing
the accuracy of parameter estimation is the modulus of the largest
eigenvalue of $\Theta$. Only in cases where this eigenvalue has
modulus less than 1 are we able to accurately estimate $\Theta$. This
makes sense, as large values of this eigenvalue mean a fast mean
reversion process. With fixed time lag between observations, the
autocorrelation between observations is relatively small, and the
measurement error is independent, so the problem becomes close to
unidentifiable, resulting in large errors in the
estimates.

Higher dimension of observations $y_t$ gives relatively better
estimation. This is because more dimensions in $y_t$ means more
information. However, this also increases the number of parameters in
$Z$ to be estimated, thus resulting in some additional challenge for
higher dimensional problems.

The angle between the columns of $Z$ does not appear to have a large
effect on the parameter estimation. The squared difference between
eigenvalues also does not appear to have a strong effect on the
estimation, both for eigenvalues being real or complex. Real and
complex eigenvalues produce qualitatively different dynamics, so it
would have been plausible to see an effect on estimation accuracy.

\begin{figure}[h]
\centering
\includegraphics[width = \textwidth]{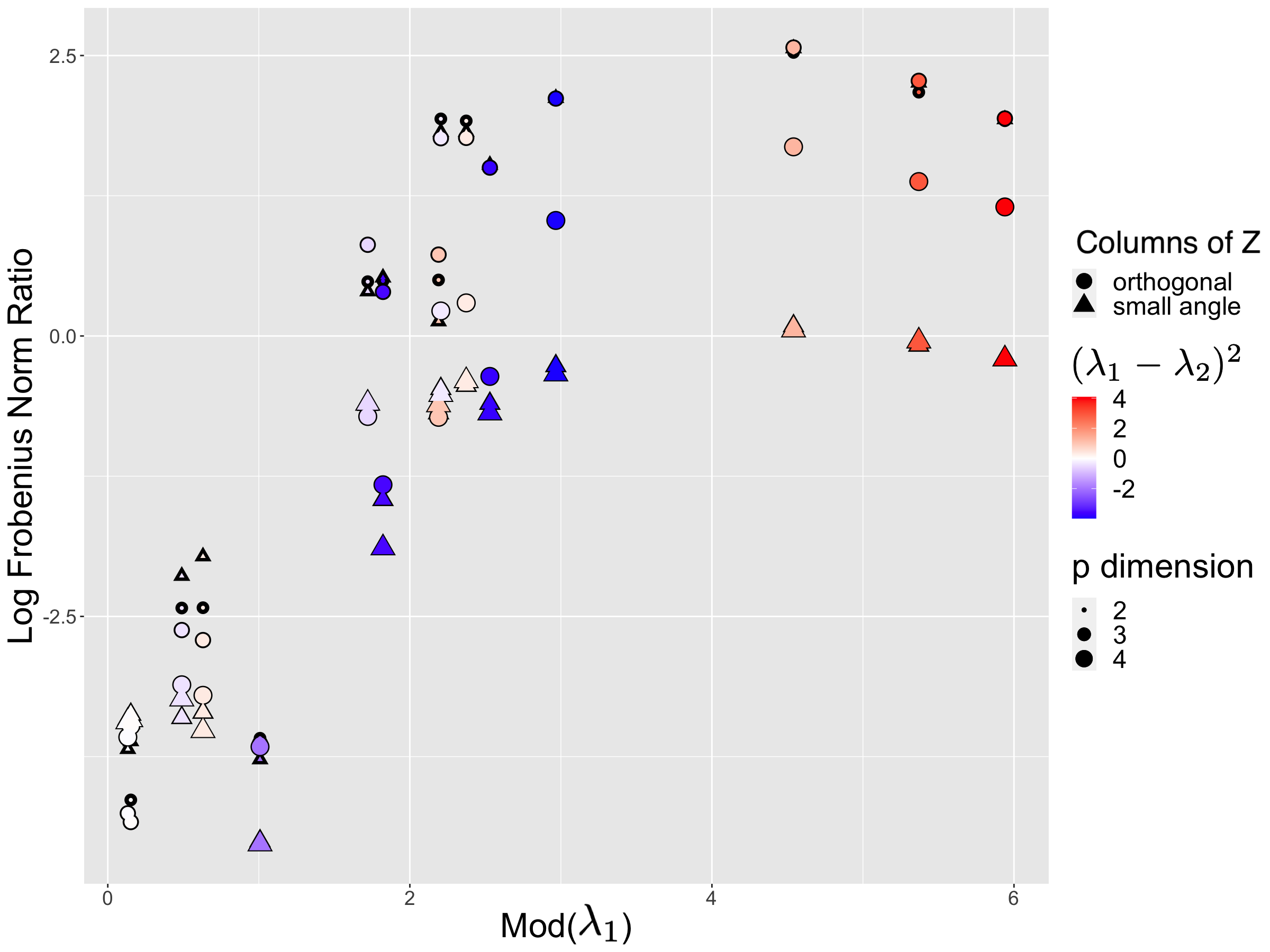}
\caption{Average Logarithm of Frobenius norm ratio between
  $\exp{(-\hat{\Theta})}$ and true $\exp{(-\Theta})$ value over 100
  simulations for each simulation scenario, compared with the scenario
  properties - modulus of the largest eigenvalue of $\Theta$ and
  squared difference between the eigenvalues of $\Theta$. Note that
  the magnitude of the squared difference between the eigenvalues
  indicates how close the eigenvalues are, while the sign of the
  squared difference indicates whether the eigenvalues are real or
  complex.}
\label{simu}
\end{figure}

For each simulation scenario, we compute the AIC for varying
dimensions of the state equation. Specifically, in each scenario, we
analyze 100 simulated datasets and determine the proportion of cases
where the minimum AIC corresponds to a one-dimensional,
two-dimensional, or higher-dimensional state equation. As shown in
Figure~\ref{AIC}, when the log Frobenius norm ratio is less than zero,
the probability of correctly identifying the true state equation
dimension is significantly higher. When the log Frobenius norm ratio
is larger than zero, a one-dimensional state equation is more often
chosen than higher dimensions. This shows that AIC balances the model
bias and model variance since the parameter variances in these cases
are larger. Using BIC (plot in Appendix~D), the model choice is more
often a one-dimensional state equation. In practice, we suggest
examining both information criteria to choose the dimension of the
state equation for OUSSM.

\begin{figure}[h]
\centering
\includegraphics[width = \textwidth]{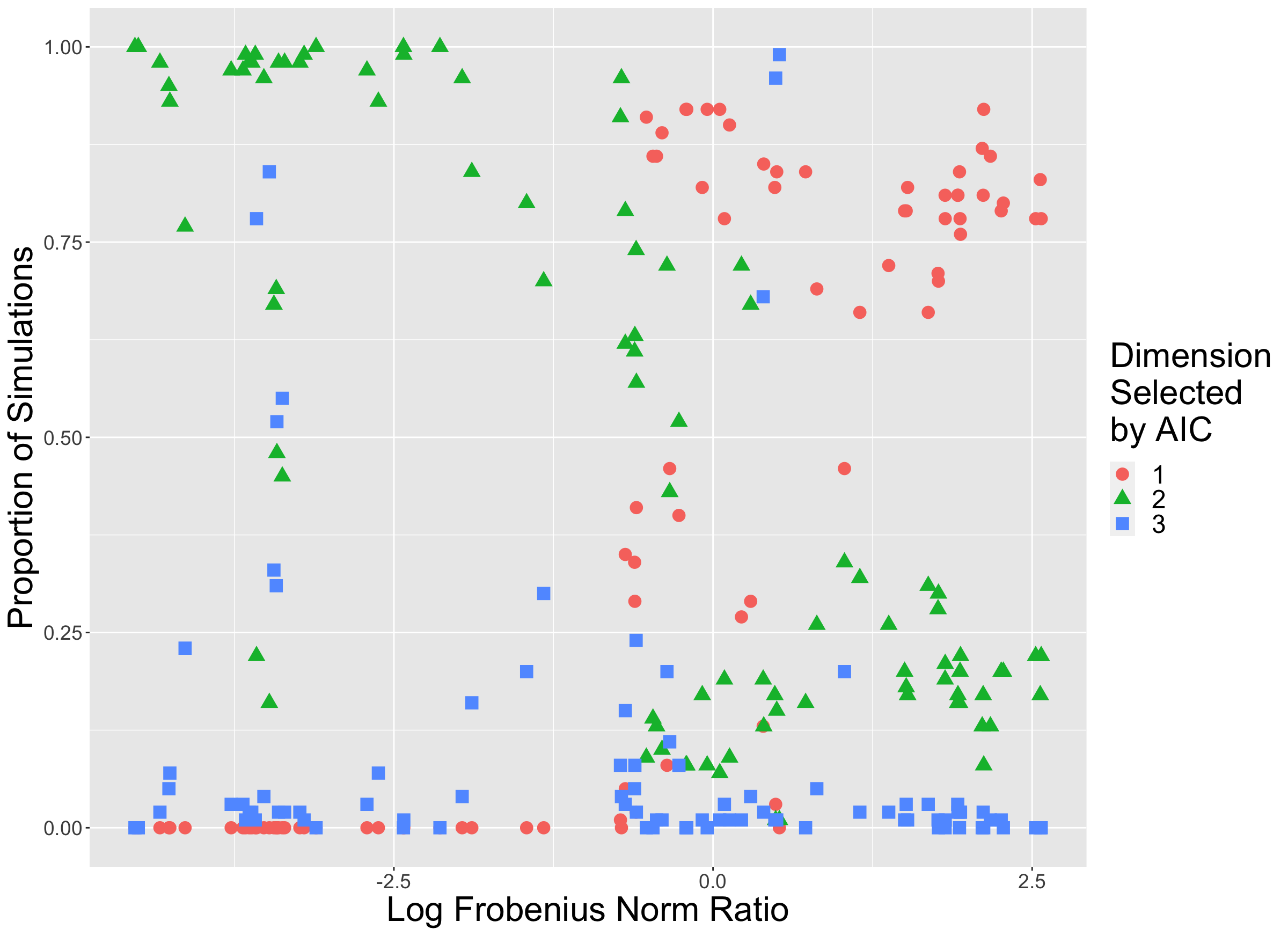}
\caption{Proportion of times AIC selects different dimensions for the state equation versus the average Logarithm of Frobenius norm ratio between $\exp{(-\hat{\Theta})}$ and true $\exp{(-\Theta})$ value over 100 simulations for each simulation scenario. }
\label{AIC}
\end{figure}

\section{Real Data Analysis}
\label{sec5}
\subsection{Microbiome Data}
We apply our method to the moving picture
dataset~\citep{caporaso2011moving} to learn more about the dynamics of the
human gut microbiome.
\subsubsection{Data Description and Preprocessing}
We use the factor OUSSM to study the gut microbiome data from a
healthy individual. (We focus on Person~2 in the moving picture study
because the time series is longer for this individual). The dataset
comprises 336 observations collected between October 2008 and January
2010. The sampling intervals were irregular, with some samples taken
daily and others taken at intervals of several days. The samples in
the Moving Picture dataset were sequenced using PCR targeting the V2
region of the 16S rRNA gene~\citep{costello2009bacterial}.

To properly analyze microbiome data, it is crucial to account for
sequencing depth. Microbiome data analysis presents challenges due to
the influence of technical factors on total read counts, which can
obscure associations with the true microbial abundances in the
original sample~\citep{zaheer2018impact}. Nevertheless, researchers
have found that relative abundances of microbial taxa provide a more
accurate representation of their proportions in the original
environment. As a result, various correction methods, based on
observed relative abundances, are commonly used to mitigate the
effects of these technical factors. In this study, we apply a
logarithmic transformation of the ratio of Operational Taxonomic Unit
(OTU) abundances between a specific genus and the total abundance of
phylum Bacteroidetes. We study the dynamics at genus level because the
low abundance at species level can cause the normal approximation to
be inaccurate.

One common challenge in microbiome data analysis is the presence of
excess zeros~\citep{kaul2017analysis}. To enable our log-scale analyses to work, we replace zero
counts with a pseudocount. As we are
focussing on relatively abundant genera, there are very few zero
counts, so the choice of pseudocount has little effect on the
estimated parameters. We use 0.3 for the pseudocount for this
analysis.

\subsubsection{Estimation results and interpretation for {\it Eubacterium} and {\it Clostridium}}

We examine the interactions between several different groups of genera
to demonstrate the types of dynamics that can be found within the
system.

For our first example, we analyze the genera {\it Eubacterium} and
{\it Clostridium} from the Tenericutes phylum. We fit the data
using state equations of varying dimensions and compute the
corresponding AIC and BIC values. As seen in Table~\ref{AICall}, both
AIC and BIC reach their minimum at a two-dimensional state equation,
indicating that a two-dimensional model provides the best balance
between model complexity and goodness-of-fit.

\begin{table}[h] 
\centering 
\caption{AIC and BIC values for different state equation dimensions
  for the sets of genera modelled in this paper. The lowest values are
  highlighted.}
\label{AICall} 
\begin{tabular}{ccccccc} 
\hline 
&State Dim ($m$) & 1 & 2 & 3 & 4 & 5\\\hline 
 {\it Eubacterium}  & AIC & 761.4 & \textbf{545.7} & 547.3 \\ 
and {\it Clostridium} & BIC & 787.4 & \textbf{586.5} & 606.8 \\ \hline 
 {\it Desulfovibrio} & AIC & 109.5 & 86.3 & \textbf{85.7} & 98.4 \\ 
and {\it Coprococcus}& BIC & 135.4 & \textbf{127.1} & 145.1 & 180.1\\ \hline
Proteobacteria &
AIC && 1818.5 & 1567.1 & \textbf{1516.4} & 1544.3  \\ 
phylum& BIC && 1889.1 & 1663.6 & \textbf{1642.6} & 1704.0 \\ \hline
Bacteroidetes &
AIC && 1701.8 & 1347.0 & \textbf{1173.7} & 1181.9  \\ 
phylum& BIC && 1787.2 & 1462.1 & \textbf{1322.2} & 1367.5 \\ \hline 
\end{tabular} 
\end{table} 

For this pair of genera, the eigenvalues of $\Theta$ are real, so as mentioned in Section~\ref{para_interpretation}, we can change the basis so that $\Theta$ is diagonal. For this changed basis, the parameter estimates are:
$$\Theta' = \begin{pmatrix}
1.0495 & 0 \\
0 & 0.0517 
\end{pmatrix}, Z' = \begin{pmatrix}
1.0060 & 0.1381 \\
0.3248 & 0.3095 
\end{pmatrix}, \Sigma' = \begin{pmatrix}
1 & -0.1620 \\
-0.1620 & 1 
\end{pmatrix}$$
These matrices indicate that there are two separate latent
mean-reverting processes with little interaction (low correlation in
the matrix $\Sigma$), with one exhibiting a faster rate of mean
reversion and the other a slower rate. {\it Eubacterium} is primarily
driven by the faster latent mean-reverting process, while {\it
  Clostridium} is influenced by both processes. To exemplify the
nature of the estimated dynamics, we simulate a sequence of relative
abundances using the estimated parameters, excluding the measurement
error. The simulated dynamics where {\it Eubacterium} exhibits a rapid
mean-reverting process, while {\it Clostridium} follows a combination
of this mean-reverting process with a relatively slower mean-reverting
process are illustrated in Figure~\ref{simuDynamic}.

This result aligns with previous research findings. For instance,
\citet{mukherjee2020gut} reported that short-chain fatty acid (SCFA)
levels significantly affect the abundance of {\it Eubacterium}, a
known SCFA-producing bacterium. Additionally, SCFA levels in the human
gut can fluctuate rapidly in response to dietary changes, particularly
within a few days of fiber and carbohydrate intake
adjustments~\citep{tan2023dietary,durholz2020dietary}. As SCFAs are
largely produced through bacterial fermentation of undigested
carbohydrates, increased fiber boosts SCFA
production~\citep{rios2016intestinal}. These findings could explain
the rapid mean-reverting behavior of {\it Eubacterium} in our results.

The rapid fluctuations in the abundance of {\it Clostridium} from the 
Tenericutes phylum are largely due to its responsiveness to
environmental and dietary changes. Research indicates that dietary
modifications, such as increased fiber intake or microbiota
transplants, can significantly affect the gut microbiota composition
within a matter of days, explaining the fast mean-reverting component
of {\it
  Clostridium}~\citep{zhong2021effect,karahan2024environmental}. This
could also explain why the faster mean-reverting component is shared
between {\it Clostridium} and {\it Eubacterium}. In addition, {\it
  Clostridium} species exhibit resilience and long-term
stability. After short-term disturbances, they tend to re-establish
due to microbial diversity, functional redundancy, and interactions
that promote ecosystem stability. Factors such as microbial
competition and biofilm formation might also contribute to the slow
mean-reverting component of {\it
  Clostridium}~\citep{fassarella2021gut}.

\begin{figure}[h]
  \centering
  \begin{subfigure}{\textwidth}
  \centering
    \includegraphics[width=0.65\textwidth]{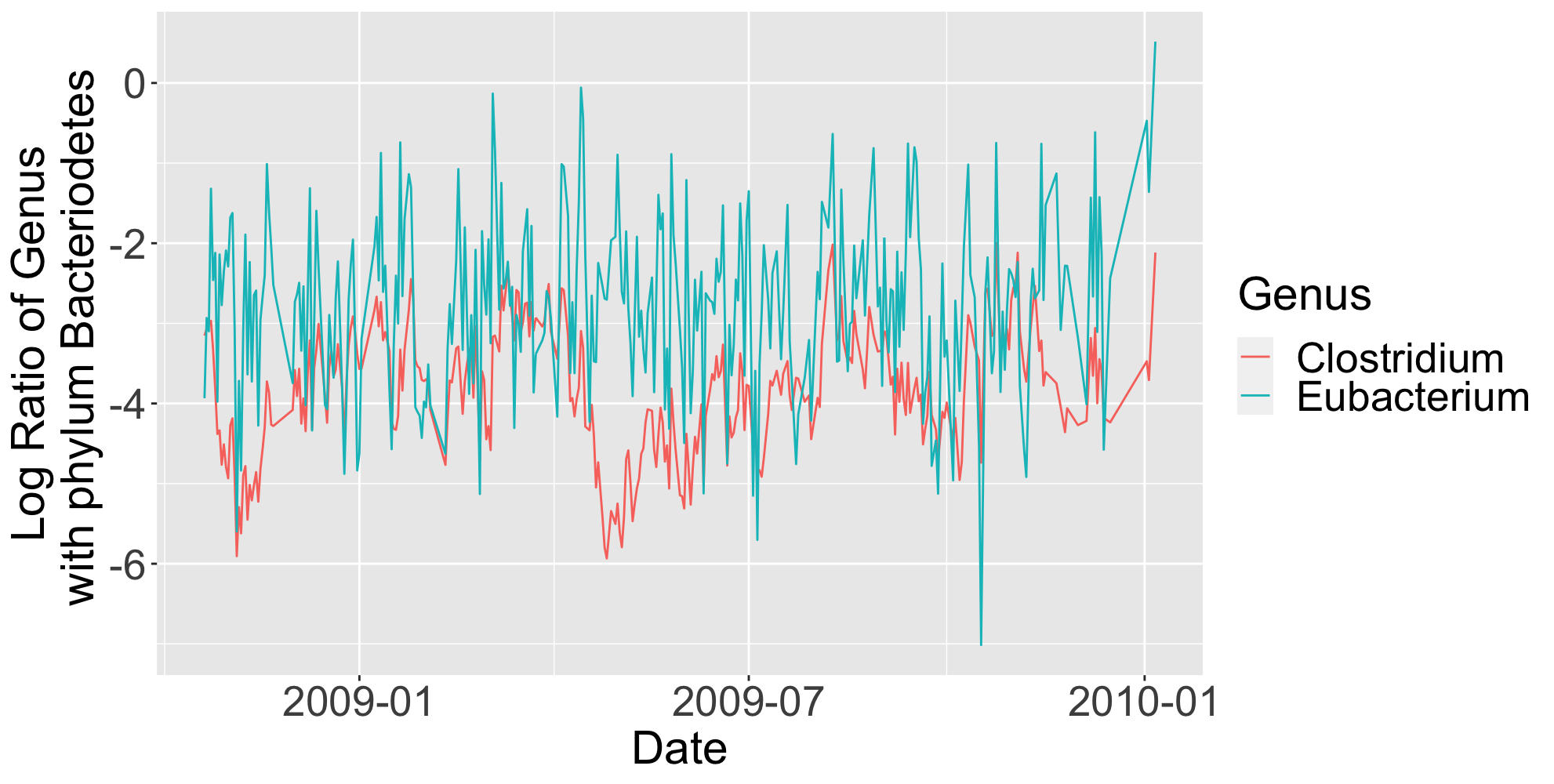}
    \caption{Simulated dynamics based on estimated parameters}
\label{simuDynamic}
\end{subfigure}
  \begin{subfigure}{\textwidth}
\includegraphics[width=0.5\textwidth]{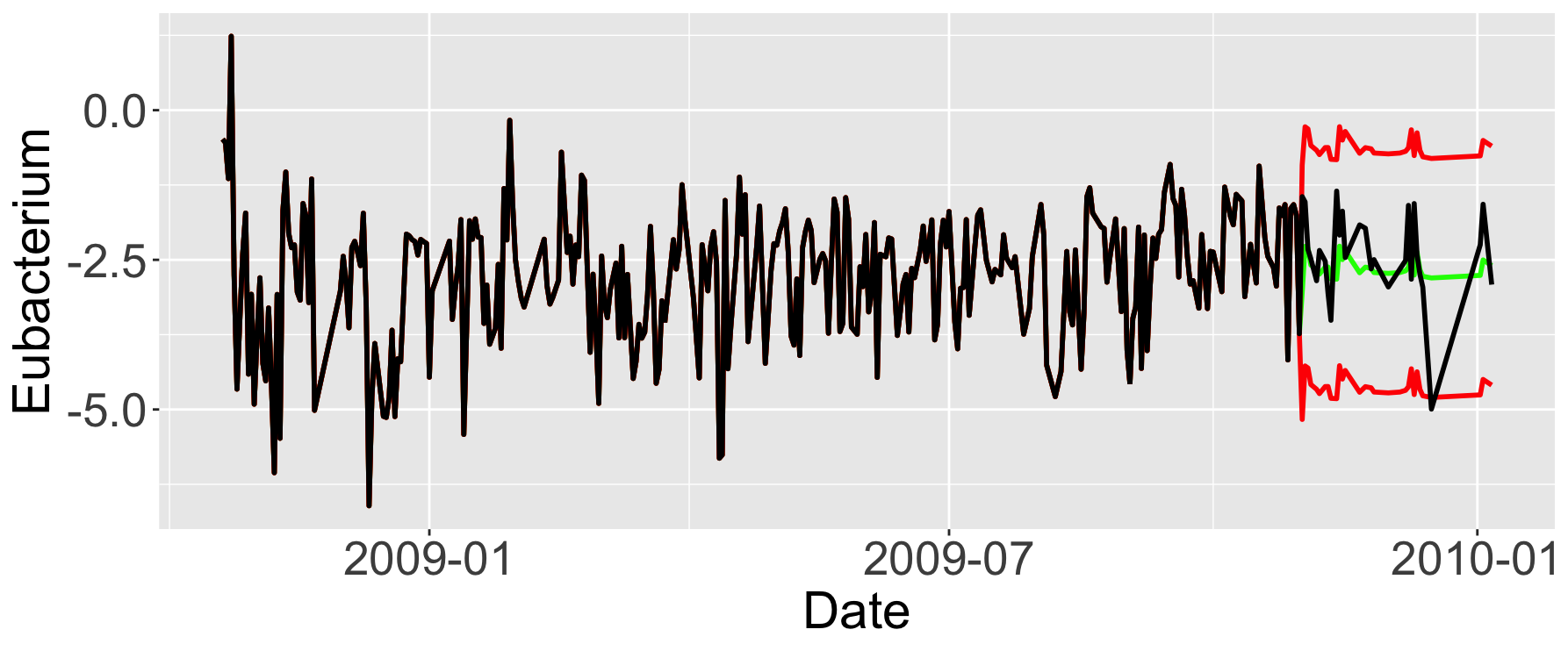}
\includegraphics[width=0.5\textwidth]{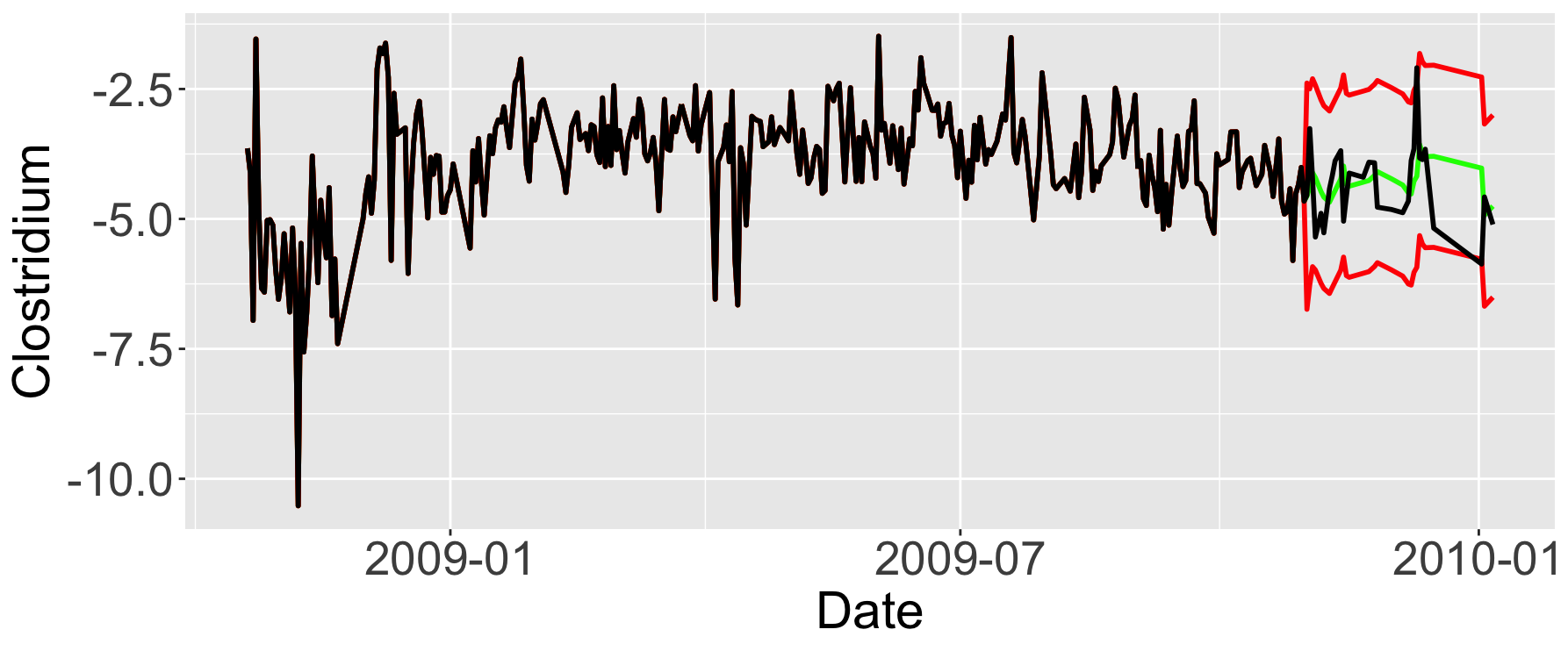}
\caption{Prediction based on factor OUSSM. Black line - observed data. Green line - one-step forward predictions using parameter values estimated on the training data. Red line - 95\% prediction intervals.}
\label{predReal}
\end{subfigure}
    
\caption{Simulated latent dynamics and forecasts for {\it Eubacterium} and {\it Clostridium}}
\end{figure}

To assess the reliability of the estimated parameters, we use the
first 90\% of the data as a training set and the remaining 10\% as a
test set. We use the training data to estimate the parameters, then
use the estimated parameters to make one-step forward predictions for
the remaining timepoints. Specifically, we apply the Kalman filter to
our test data, forcasting each $x_{t_n}$ at the test timepoints and
using the resulting values with our parameter estimates to predict the
corresponding $y_{t_n}$. Additionally, we calculate a 95\% prediction
interval for each prediction. Figure~\ref{predReal} compares the
observed abundances with these predictions and a 95\% prediction
interval. We see that the majority of points fall within the 95\%
prediction intervals.

\subsubsection{Estimation results and interpretation for {\it Desulfovibrio} and {\it Coprococcus}}
For an example of qualitatively different dynamics, we look at the
genera {\it Desulfovibrio} and {\it Coprococcus}. As before, we fit
the data using state equations of varying dimensions and compute the
corresponding AIC and BIC values. %
%
From Table~\ref{AICall}, we observe that AIC reaches its minimum at
$m=3$, while BIC is minimized at $m=2$. However, the AIC value for
$m=2$ is very close to that of $m=3$, whereas the BIC at $m=3$ is
significantly larger than at $m=2$. Given that the AIC difference
between $m=2$ and $m=3$ is relatively small, we determine that a
two-dimensional state equation provides the most balanced model
selection. For these genera, the estimated $\Theta$ has complex
eigenvalues.
$$\Theta = \begin{pmatrix}
0.9883 & 0.1981 \\
-0.1981 & 0.6960 
\end{pmatrix}, Z = \begin{pmatrix}
0.3588 & 0.6064 \\
-0.1747 & 0.6417
\end{pmatrix},\Sigma = \begin{pmatrix}
1 & 0 \\
0 & 1 
\end{pmatrix}$$

\begin{figure}[H]
\begin{subfigure}{\textwidth}
  \centering
\includegraphics[width=0.6\textwidth]{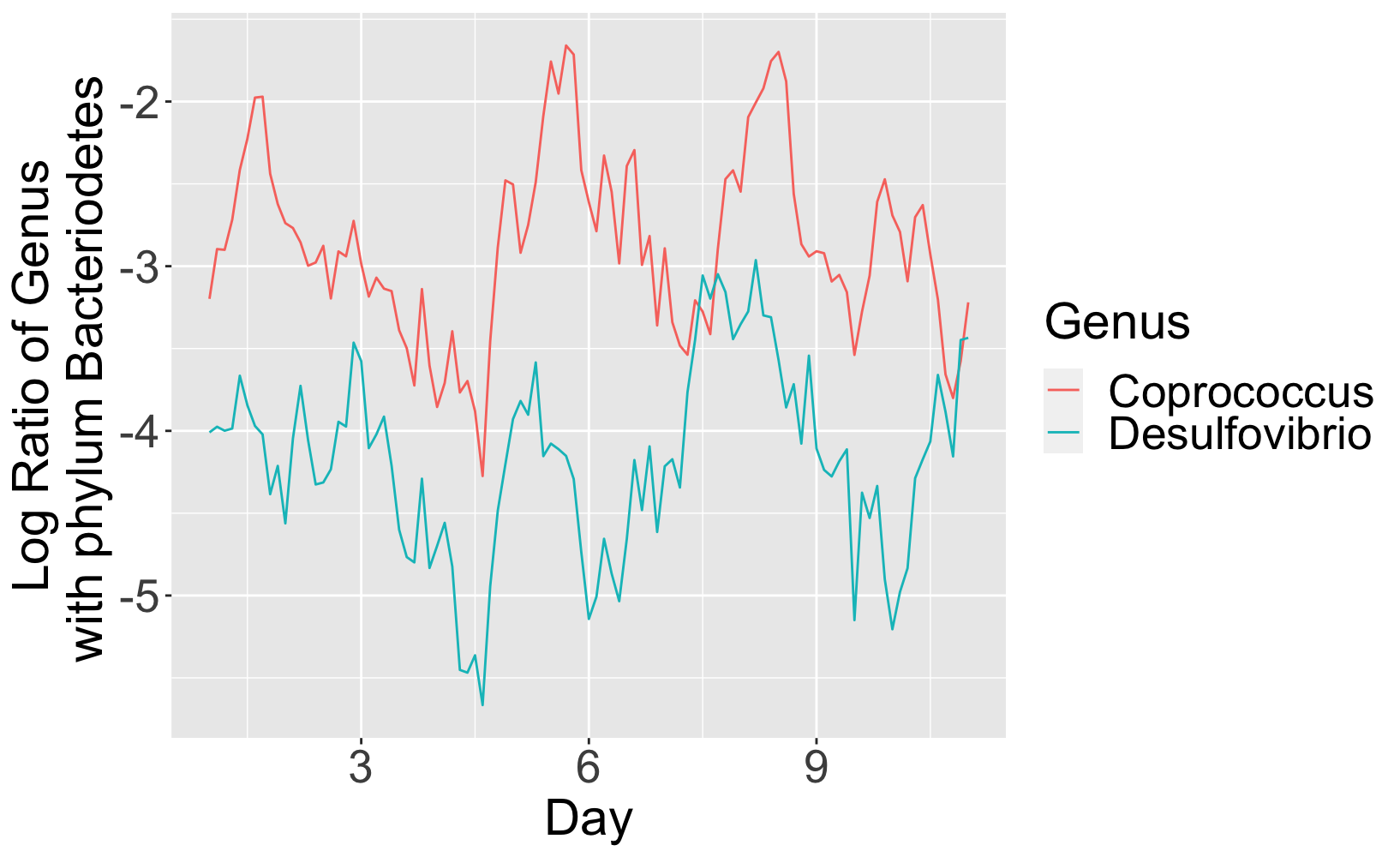}\\
\includegraphics[width=0.4\textwidth]{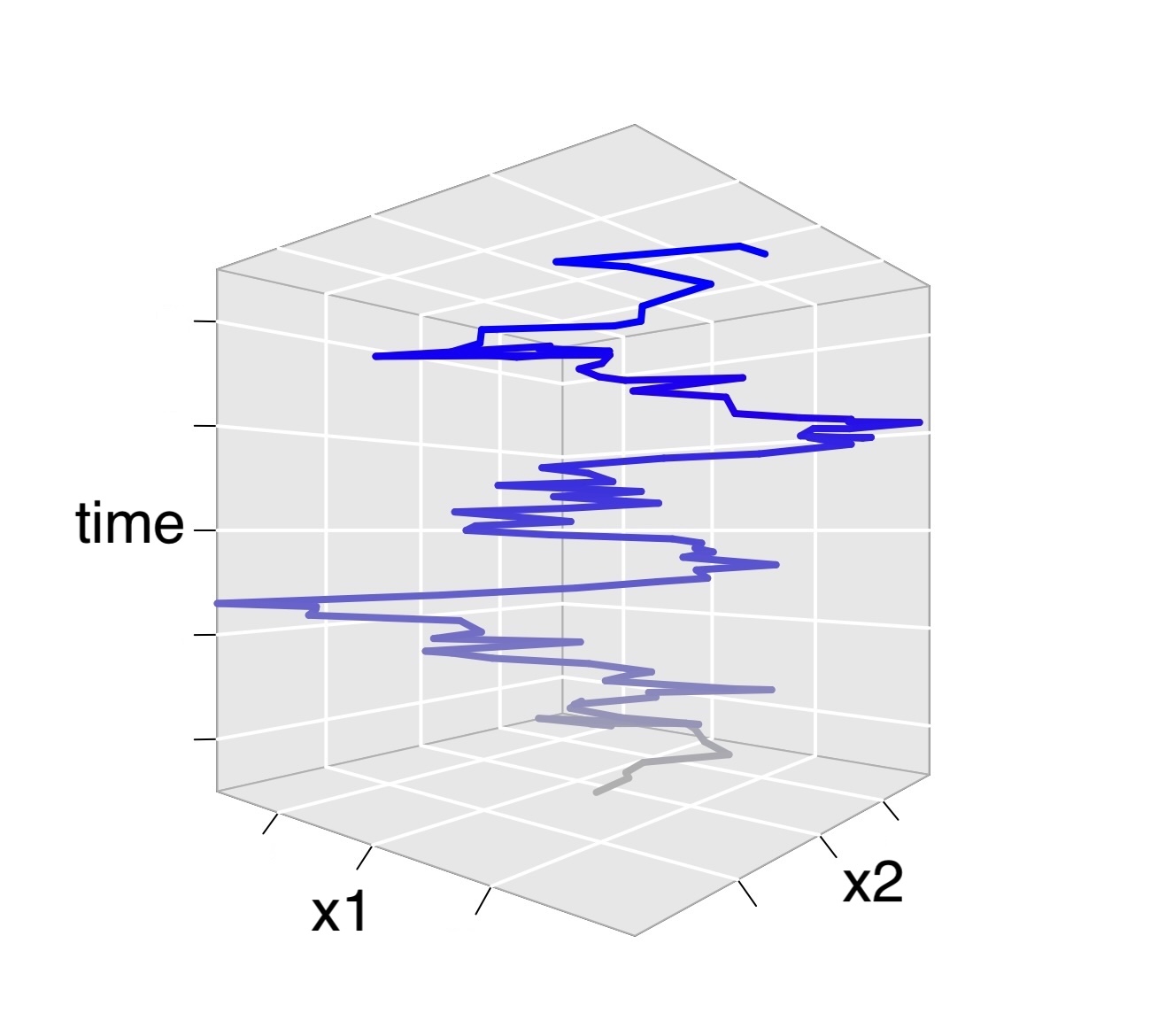}
\includegraphics[width=0.55\textwidth]{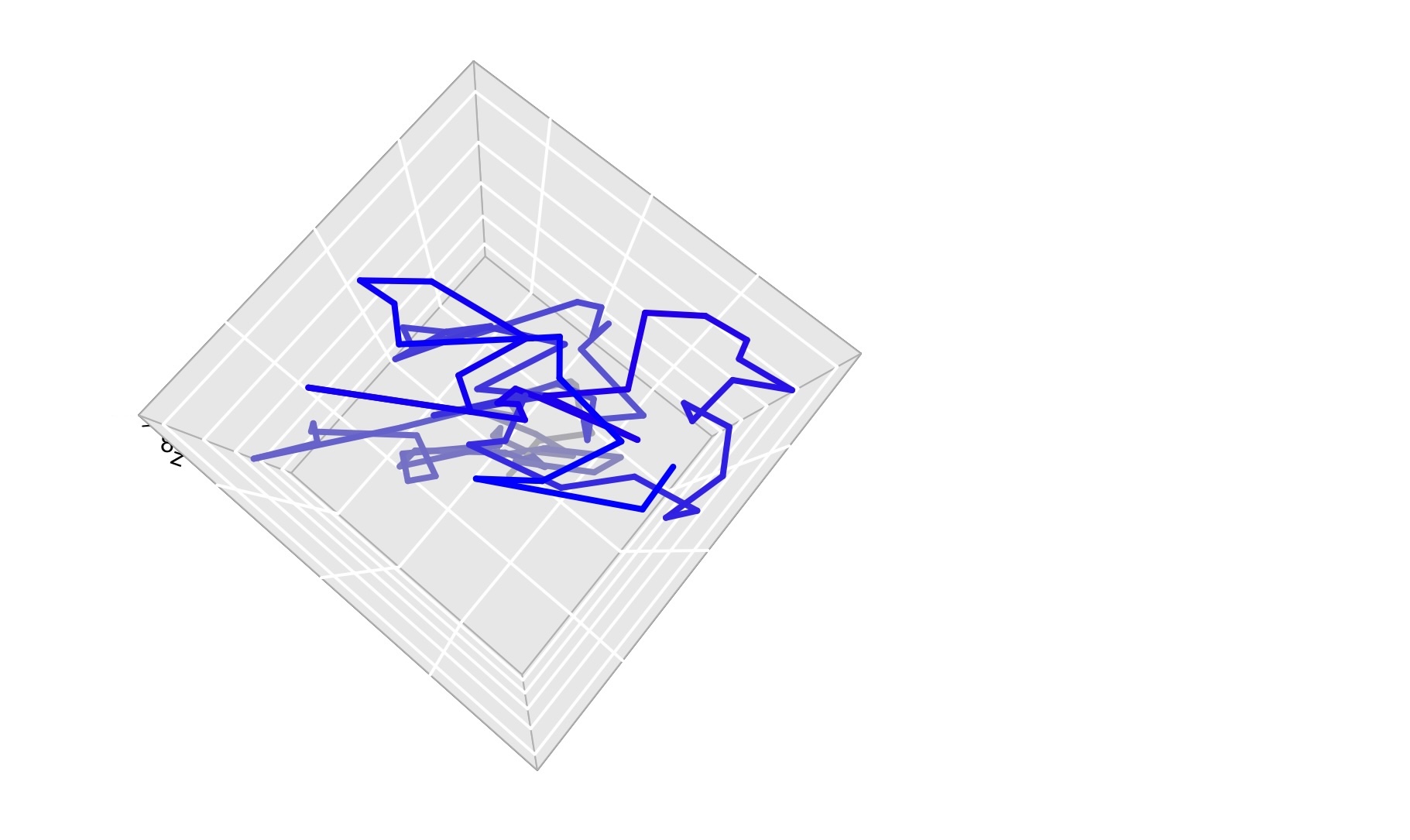}
\caption{Simulated latent dynamics under the estimated  parameters}
\label{simuComplex}
\end{subfigure}

\begin{subfigure}{\textwidth}
\centering
\includegraphics[width=0.45\textwidth]{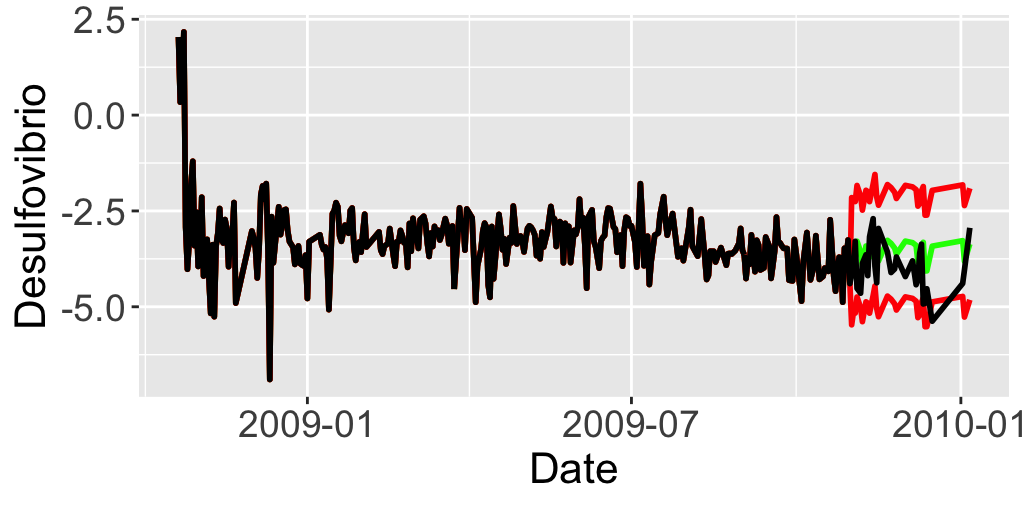}
\includegraphics[width=0.45\textwidth]{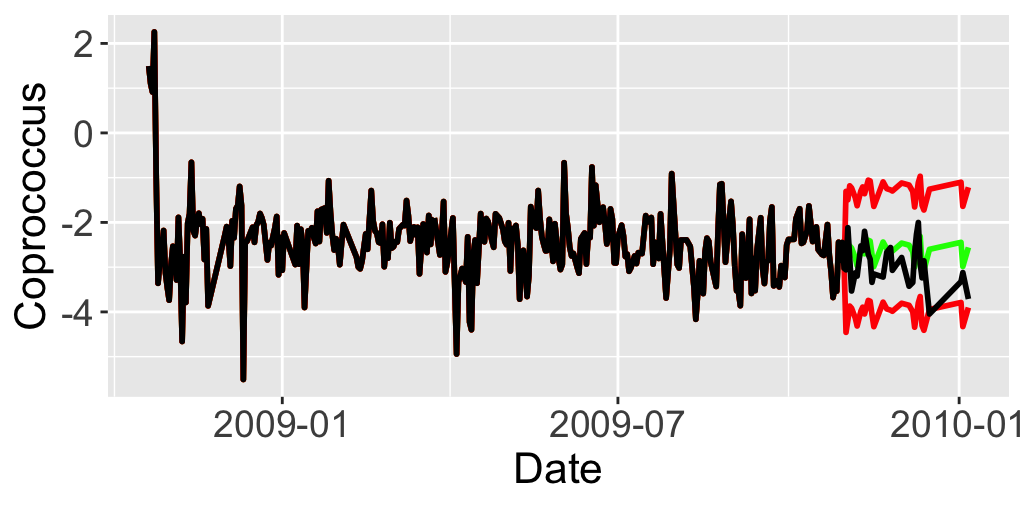}
\caption{Prediction based on factor OUSSM. Black line - observed data. Green line - one-step forward predictions using parameter values estimated on the training data. Red line - 95\% prediction intervals.}
\label{predComplex}
\end{subfigure}
\caption{Results for {\it Desulfovibrio} and {\it Coprococcus}}

\end{figure}

As for {\it Eubacterium} and {\it Clostridium}, we simulate abundances
(without measurement error) based on the estimated dynamics, to gain
better insight into the estimated dynamics. Figure~\ref{simuComplex}
shows a simulated set of dynamics from these parameters. We see that
our estimated parameters show a much larger correlation between
abundances of these genera. As always for complex eigenvalues, there
is a slight lag in the relation, with {\it Coprococcus} peaking
slightly after {\it Desulfovibrio}. This type of lagged correlation is
typically the sort of dynamics that happen in a predator-prey
system. It can also arise from certain combinations of competition and mutualism. 

Furthermore, an examination of the correlation
between the two simulated microbial taxa, {\it Desulfovibrio} and {\it
  Coprococcus}, reveals a positive correlation. This finding aligns
with the observations reported by \citet{chen2021desulfovibrio},
supporting the biological relevance of the inferred relationships. The
ability of our model to capture such correlations suggests that it can
provide meaningful insights into microbial interactions, further
underscoring its utility for microbiome research and related inference
tasks.

Figure \ref{predComplex} shows the one-step forward predictions with
prediction intervals based on the estimated parameters using the
initial 90\% of the data as training data. Again, we see that the data
mostly fall within the prediction intervals.

\subsubsection{Estimation results and interpretation for Different Phyla}
\paragraph{Proteobacteria Phylum}

Applying our estimation method to selected genera within the 
Proteobacteria phylum, specifically \textit{Desulfovibrio},
\textit{Bilophila}, \textit{Campylobacter}, and \textit{Escherichia},
we determine that the state equation achieves the minimum AIC at four
dimensions as shown in Table~\ref{AICall}.

After a transformation such that $\Theta$ is block diagonal, the
estimated parameters are

$$\Theta = \begin{pmatrix}
  1.2406& -0.1660& 0.0000 &0.0000\\
  0.1660 & 1.2406& 0.0000& 0.0000\\
  0.0000 & 0.0000 &0.5500 &0.0000\\
  0.0000 & 0.0000& 0.0000 &0.0738\\
\end{pmatrix}$$ 
$$Z = \begin{pmatrix}
  1.8010 &0.3737 & 0.2199 & 0.7495\\
  1.9865 &0.4486 & 0.1538 & 0.3428\\
  -4.9384 &4.8996 & 2.3251& -0.2312\\
  -1.1243 &2.3004 &-0.9356 & 0.3946\\
\end{pmatrix}$$
$$ \Sigma = \begin{pmatrix}
  0.3343  &0.0811 & 0.1412& -0.2601\\
  0.0811 & 0.7899& -0.5903& -0.0204\\
  0.1412 &-0.5903 & 0.8958 &-0.1575\\
  -0.2601 &-0.0204 &-0.1575&  0.9800\\
\end{pmatrix}$$

From the loading matrix, both \textit{Desulfovibrio} and
\textit{Bilophila} are dominated by the first latent process, with
each of the four genera being a mixing of all four latent
processes. The four latent OU processes are moderately correlated,
with only one relatively larger negative correlation between the
second and third latent processes. The presence of complex eigenvalues
indicates a spiral structure in the system's dynamics, suggesting fast
oscillatory behavior in the temporal evolution of microbial abundances
of all these four genera. The two real eigenvalues correspond to one
moderate and one very slow mean-reverting component. This structure
aligns with the expectation that microbial populations are influenced
by both cyclic interactions and stabilizing regulatory forces.

\paragraph{Bacteroidetes Phylum}

A similar estimation procedure is applied to genera within the 
Bacteroidetes phylum, including \textit{Porphyromonas},
\textit{Prevotella}, \textit{Parabacteroides}, \textit{Alistipes}, and
\textit{Odoribacter}. The optimal state equation dimension is again
found to be four, as indicated by the minimum AIC criterion shown in
Table~\ref{AICall}.

The estimated parameters, after a transformation such that $\Theta$ is
block diagonal, are

$$\Theta = \begin{pmatrix}
  2.1197& 0.2790& 0.0000 & 0.0000\\
  -0.2790 & 2.1197 &0.0000&  0.0000\\
  0.0000 & 0.0000 &0.0541& 0.0580\\
  0.0000 & 0.0000 &-0.0580 & 0.0541\\
\end{pmatrix}$$ 
$$Z = \begin{pmatrix}
  6.5600 &0.0777& 1.0319& 0.1752\\
  7.3411& 0.8014& 1.1135& 0.2286\\
  -1.9118& 1.7643 & -0.4745& 0.1113\\
  -2.0149 &1.6952 & -0.2674 &0.2815\\
  -1.9891& 1.7579 & -0.6163& 0.7251\\
\end{pmatrix}$$
$$ \Sigma = \begin{pmatrix}
  0.5103 &0.0931 &-0.0687 & -0.0011\\
  0.0931 & 0.4901&  -0.0853 &-0.0713\\
  -0.0687 & -0.0853 & 0.4681& 0.2530\\
  -0.0011 &-0.0713 &0.2530 & 0.5315\\
\end{pmatrix}$$

The first two latent processes are almost independent of the last two
latent processes with the first two latent processes exhibiting fast
mean reversion, with some cyclical components and the last two latent
processes slowly oscillating. From the loading matrix, these five
genera all have the largest loading on the first latent process,
showing they all are dominated by a fast oscillatory behavior in the
temporal dynamics and all also have a slow oscillatory component. The
magnitudes of the coefficients in the loading matrix show that the
dynamics of these five genera are all correlated.

\subsection{Sea Surface Temperature Data}
\label{SST}
In this section, we apply our method to the sea surface temperature dataset. 
\subsubsection{Data Description and Preprocessing}
The sea surface temperature (SST) data used in this study were obtained from the ERA5 reanalysis produced by the Copernicus Climate Change Service (C3S) at the European Centre for Medium-Range Weather Forecasts (ECMWF) \citep{C3S2017}. ERA5 provides a consistent global record of atmospheric, land, and oceanic variables from 1940 to the present, combining numerical weather prediction model outputs with a wide range of historical observations through a data assimilation framework. The SST variable represents the temperature of the ocean surface layer, expressed in kelvins, and is derived primarily from satellite and \textit{in situ} measurements assimilated into the reanalysis system.

For this analysis, we extracted the ERA5 daily data at 12:00~UTC from the single-level field ``sea surface temperature.'' The data have a spatial resolution of $0.25^\circ \times 0.25^\circ$ (approximately 25~km at the equator) and cover all ocean grid points between $90^\circ$S and $90^\circ$N. To focus on the study region of interest, we subset the global field to the North Atlantic sector (approximately $33^\circ$--$37^\circ$N, $55^\circ$--$59^\circ$W) with a spatial sampling resolution of $1^\circ$ as shown in Figure~\ref{SST_map}. This region encompasses the area relevant to our subsequent analyses.

\begin{figure}[h]
\centering
\includegraphics[width=\textwidth]{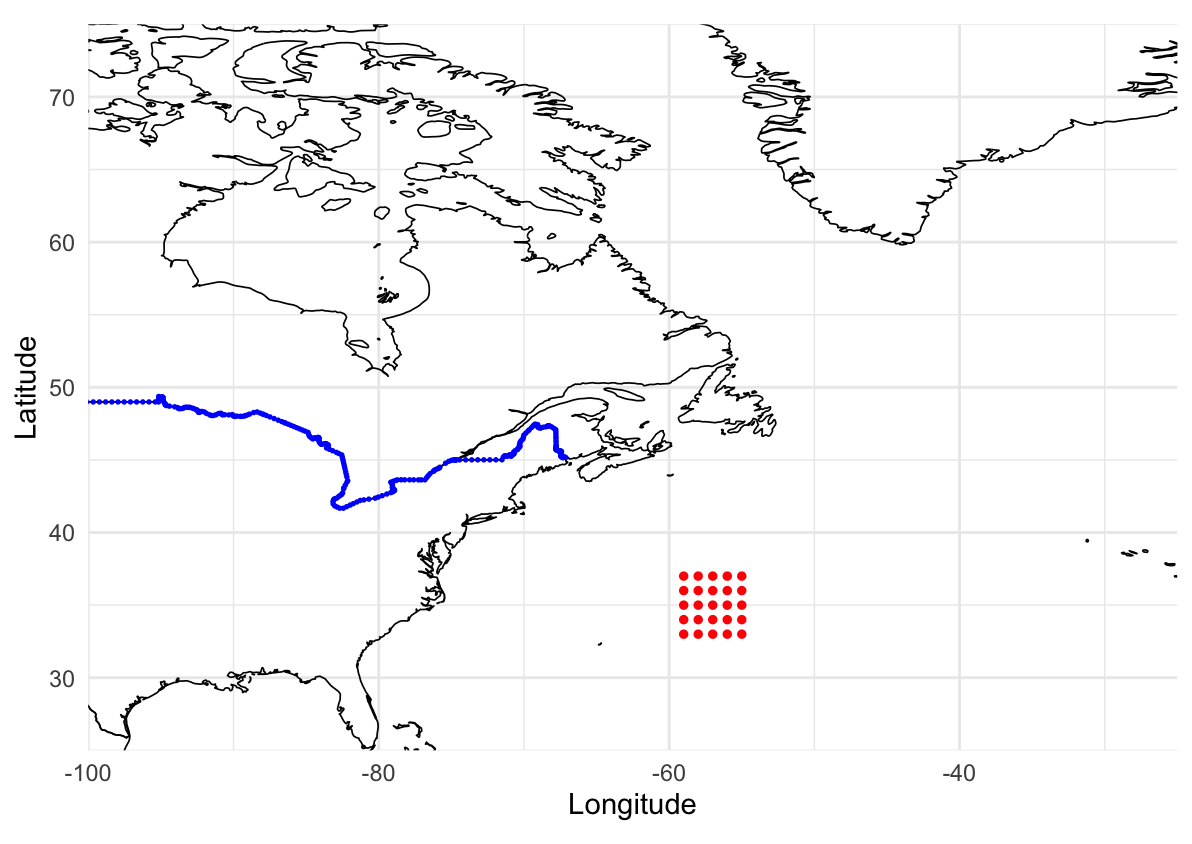}
\caption{Geographical region of sea surface temperature (SST) grid points used in the analysis. Each red dot represents a selected ERA5 grid point within the North Atlantic region of interest. The black outlines indicate the coastlines of North America and adjacent areas. The blue line highlights the international boundary between Canada and the United States.}
\label{SST_map}
\end{figure}

Each SST value represents the model-estimated instantaneous temperature at the ocean surface at a given grid point and time. The dataset is distributed through the Copernicus Climate Data Store (CDS) and is freely available under the Copernicus open data policy.

Before statistical modeling, the SST data were preprocessed to remove large-scale seasonal variation and ensure approximate stationarity of the time series. For each selected grid point, we computed a multi-year day-of-year (DOY) climatology to characterize the mean seasonal cycle of SST across the five-year study period (2020--2024). This climatological component was then subtracted from the raw daily SST to obtain the deseasonalized anomaly series, defined as
\[
Y_t = \mathrm{SST}(t) - \mathrm{climatology}_{\mathrm{DOY}}(t).
\]
The resulting anomalies fluctuate around zero and primarily capture short-term stochastic fluctuations in SST rather than the deterministic annual cycle. These anomaly time series were subsequently used as input to the OUSSM framework.

As shown in Table~\ref{AIC_SST}, both the AIC and BIC attain their minimum values at state dimension $m = 11$.

\begin{table}[h] 
\centering 
\caption{AIC and BIC values for different state equation dimensions. The lowest values are highlighted.} 
\label{AIC_SST} 
\begin{tabular}{cccc} 
\hline 
State Dimension ($m$) & 10 & 11 & 12 \\\hline 
AIC & -107494.1 & \textbf{-108689.5} &  -107219.3  \\ 
BIC & -105537.9 & \textbf{-106534.9} & -104860.8 \\ \hline 
\end{tabular} 
\end{table} 

The estimated parameters (after transforming so that $\Theta$ is block diagonal) are
$$
\Theta = \text{diag}\left(
  0.3684, \; 
  0.289, \; 
  0.2335, \; 
  \mathbf{B}_1, \; 
  \mathbf{B}_2, \; 
  \mathbf{B}_3, \; 
  0.0867, \; 
  0.0365
\right)
$$
where the $2 \times 2$ rotation blocks are defined as:
$$
\mathbf{B}_1 = \begin{pmatrix} 0.1873 & -0.0411 \\ 0.0411 & 0.1873 \end{pmatrix}, \quad
\mathbf{B}_2 = \begin{pmatrix} 0.1581 & -0.0143 \\ 0.0143 & 0.1581 \end{pmatrix}, \quad
\mathbf{B}_3 = \begin{pmatrix} 0.0848 & -0.0311 \\ 0.0311 & 0.0848 \end{pmatrix}
$$

Each row of the loading matrix $Z$ corresponds to an SST location, ordered systematically by latitude and longitude: rows~1--5 represent longitudes $59^\circ$W to $55^\circ$W at $37^\circ$N, rows~6--10 the same longitudes at $36^\circ$N, and so on down to rows~21--25 at $33^\circ$N. Within each group of five rows, the first row represents the westernmost point.

The block--diagonal structure of $\Theta$ shows that the latent system consists of a mixture of real and complex eigenmodes. The real eigenvalues range from 0.368 to 0.0365, corresponding to a half-life range from approximately 2 to 19 days. The imaginary parts of the eigenvalues are between 0.014 and 0.044, corresponding to periodic behaviour with a period between 143 and 440 days. This spectrum indicates that SST variability in the region is driven by both fast and slow dynamical components, including oscillatory modes that may represent slowly evolving basin-scale adjustments.

\begin{figure}[H]
    \centering

    \begin{subfigure}[t]{0.85\textwidth}
        \centering
        \includegraphics[width=\textwidth]{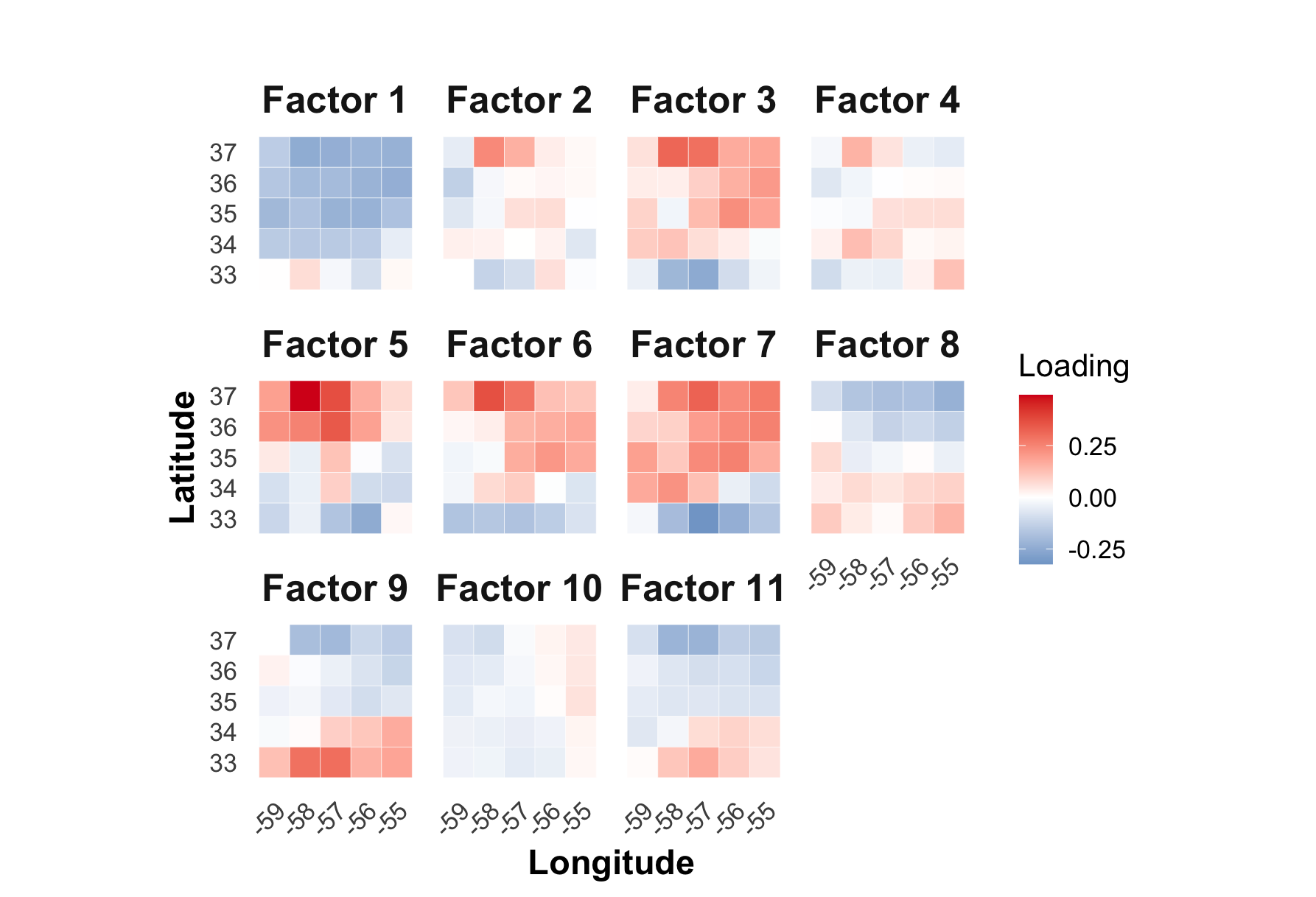}
        \caption{Spatially mapped loadings of the transformed $Z$ matrix. }
        \label{fig:Z_heatmap}
    \end{subfigure}
    \hfill\break
    \begin{subfigure}[t]{0.6\textwidth}
        \centering
        \includegraphics[width=\textwidth]{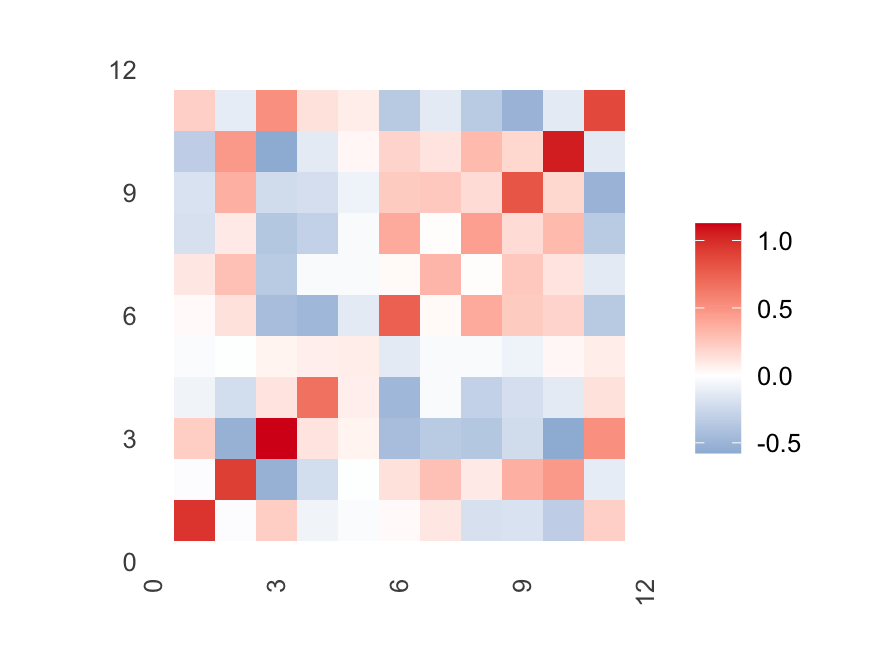}
        \caption{Heatmap of the transformed $\Sigma$ matrix.}
        \label{fig:Sigma_heatmap}
    \end{subfigure}

    \caption{
        Visualization of the estimated model matrices. 
        (a) The loading matrix $Z$ is reshaped to display the spatial pattern for each of the 11 latent factors, mapped to their corresponding Longitude ($55^{\circ}$W--$59^{\circ}$W) and Latitude ($33^{\circ}$N--$37^{\circ}$N). 
        (b) The diffusion-error covariance matrix $\Sigma$ illustrates the dependencies between the latent factors. 
        In both figures, red indicates positive entries and blue indicates negative entries.
    }
    \label{fig:Z_Sigma_heatmaps}
\end{figure}

The spatially mapped loading matrix $Z$ (Figure \ref{fig:Z_heatmap}) reveals that the latent factors correspond to distinct spatial modes defined by a latitudinal contrast. Specifically, Factors 3, 5, 6, 7, 8, 9 and 11 exhibit a North-South dipole structure, characterized by positive loadings in the northern latitudes ($35^\circ$N--$37^\circ$N) and negative loadings in the south ($33^\circ$N) or the reverse. The magnitude of these loadings is generally higher for longitudes in the centre of the region, with points on the east and west boundary of the region being less influenced by these factors. Factor 3 and 11 correspond to real eigenvalues, while Factors 6 and 7, and Factors 8 and 9 are $2\times 2$ blocks. This spatial separation suggests that the OUSSM has successfully disentangled distinct physical processes---likely separating variability driven by southern boundary currents from those driving northern sub-basin scale adjustments---consistent with the view that North Atlantic SST reflects a superposition of multiple multi-scale processes \citep{grossmann2009review,xu2023multi}.

The diffusion-error covariance matrix $\Sigma$ (Figure \ref{fig:Sigma_heatmap}) shows moderate covariances among the latent factors, including both positive and negative off-diagonal entries. Because $\Sigma$ characterizes the covariance structure of the stochastic forcing in the continuous-time OU dynamics, these values indicate coupling in the excitation of the latent processes themselves. Positive covariances suggest that some modes are driven coherently by shared environmental influences, whereas negative covariances may reflect compensatory or antagonistic forcing relationships. The lack of a strong block-diagonal pattern implies that the stochastic forcing of the components is only partially separable and incorporates mixed large-scale influences.

Such behaviour is consistent with studies showing that North Atlantic SST variability is shaped by partially coherent basin-scale processes rather than strictly independent modes (e.g., \citet{buckley2014low}). Taken together, the estimated $\Theta$, $Z$, and $\Sigma$ matrices indicate that SST variations in this region arise from a mixture of interacting dynamical components whose spatial signatures overlap, and that the stochastic forcing driving these modes is moderately shared across factors.

\section{Conclusion}
\label{sec6}
In this paper, we introduced a factor Ornstein-Uhlenbeck State Space
Model (OUSSM) to address the limitations of the traditional OU
process, particularly in handling measurement error and interactions
across multiple dimensions. Our model extends the conventional OUSSM
framework, which has been predominantly used for one-dimensional
problems, to accommodate more complex, real-world scenarios where
multiple interacting variables must be considered simultaneously. One
of the key contributions of this work is the reparametrization to
avoid identifiability problems in the model.

Our simulation studies demonstrated the efficacy of the proposed
model, particularly in its ability to accurately estimate key
parameters under a variety of scenarios. We see that when the
eigenvalues of the mean-reversion matrix are small, relative to the
time gap between samples, we are able to estimate the model parameters
fairly accurately. When one of the eigenvalues is larger, it becomes
more difficult to estimate the parameters.

We applied our model to the gut microbiome of a healthy individual. We
were able to identify qualitatively different types of latent temporal
dynamics together with the interactions between different genera. This
yields significant improvements in our understanding of microbial
dynamics.

We also applied the model to a sea-surface temperature (SST) dataset
from the western North Atlantic, demonstrating its usefulness beyond
biological systems. In this setting, the model uncovered multiple latent
mean-reverting components operating at distinct temporal scales, ranging
from fast synoptic variability to slower basin-scale thermal patterns.
The spatial loading structure revealed systematic longitudinal gradients
in how these latent processes influence SST, while the covariance
structure captured coherent large-scale anomalies as well as
compensatory warm--cool interactions. Together, these results highlight
the model's flexibility in representing complex spatio-temporal
processes in environmental systems.

An important topic for further research is how the sampling design
impacts the ability of the model to estimate model parameters,
particularly the matrix $\Theta$. As collection of samples is often
challenging, it is important to make sure each sample gives the most
information possible about the dynamics.

\section*{Disclosure statement}\label{disclosure-statement}

The authors have no conflicts of interest to declare.

\section*{Data Availability Statement}\label{data-availability-statement}

The human gut microbiome data used in this study are available from the ``Moving Pictures of the Human Microbiome'' project \citep{caporaso2011moving}. The Sea Surface Temperature (SST) data are available from the Copernicus Climate Change Service (C3S) Climate Data Store (CDS)\citep{C3S2017}.

\newpage
\bibliography{bibliography.bib}

\newpage
\appendix
\section{Proof of Theorem 1}
\label{AppA}

\setcounter{theorem}{0}

\begin{theorem}
  For any real matrices $\Theta$ and $\Sigma$, where $\Sigma$ is
  symmetric and positive definite, there is an invertible real matrix
  $A$, such that $A\Sigma A^T = I$, and $\tilde \Theta = A \Theta
  A^{-1}$ is a sum of an antisymmetric matrix plus a diagonal matrix
  with decreasing diagonal elements. Furthermore, if the diagonal
  elements of $\tilde \Theta$ are distinct, then $\tilde \Theta$ is
  unique up to conjugation by a diagonal matrix whose square is the
  identity matrix. In this case, the matrix $A$ is also unique up to
  left multiplication by a diagonal matrix with square the identity.
\end{theorem}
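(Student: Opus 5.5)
The plan is to split the problem in two: first use $\Sigma$ to remove all freedom in $A$ except an orthogonal factor, then use that orthogonal factor to diagonalize the symmetric part of the transformed $\Theta$ via the spectral theorem. Throughout, ``decreasing'' is read as non-increasing for the existence part; the uniqueness part assumes the diagonal entries are distinct.

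\emph{Reduction to orthogonal changes of basis.} Since $\Sigma$ is symmetric positive definite, it has a symmetric positive definite square root. Set $A_0=\Sigma^{-1/2}$, so that $A_0\Sigma A_0^T=I$.

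Next I characterize every invertible $A$ with $A\Sigma A^T=I$. Write $O=AA_0^{-1}$, so $A=OA_0$. Then
\[
I = A\Sigma A^T = O A_0\Sigma A_0^T O^T = OO^T ,
\]
so $O$ is orthogonal. Conversely, $A=OA_0$ works for every orthogonal $O$. Hence the admissible matrices are exactly $A=OA_0$ with $O$ orthogonal.

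Put $\Theta'=A_0\Theta A_0^{-1}$. For admissible $A$ we then get
\[
\tilde\Theta = A\Theta A^{-1} = O\Theta' O^{-1} = O\Theta' O^T .
\]
The whole problem therefore reduces to orthogonal conjugation of the single fixed matrix $\Theta'$.

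\emph{Existence.} Decompose $\Theta'=S+K$ with $S=(\Theta'+\Theta'^T)/2$ symmetric and $K=(\Theta'-\Theta'^T)/2$ antisymmetric. Because $O$ is orthogonal, $O\Theta'O^T=OSO^T+OKO^T$, where $OSO^T$ is symmetric and $OKO^T$ is antisymmetric. By uniqueness of the symmetric/antisymmetric decomposition, these are precisely the two parts of $\tilde\Theta$.

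By the spectral theorem there is an orthogonal $O$ with $OSO^T=D$ diagonal, and permuting the eigenvectors lets the diagonal entries be arranged in decreasing order. Then $\tilde\Theta=D+OKO^T$ has the required form, with $A=OA_0$.

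\emph{Uniqueness.} Suppose $\tilde\Theta = D+K_1$ is any admissible representative, with $D$ diagonal and $K_1$ antisymmetric. Its symmetric part is exactly $D$, since an antisymmetric matrix has zero diagonal and contributes nothing to the symmetric part. The diagonal of $\tilde\Theta$ is therefore the spectrum of $S$, listed in decreasing order, so $D$ is determined by $\Theta'$ alone.

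Now take two admissible choices $O_1,O_2$. Then $O_1SO_1^T=D=O_2SO_2^T$, so the orthogonal matrix $U=O_2O_1^T$ commutes with $D$. If the diagonal entries of $D$ are distinct, any matrix commuting with $D$ is diagonal. A diagonal orthogonal matrix has entries $\pm1$, so $U^2=I$.

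It follows that $O_2=UO_1$ and hence $A_2=UA_1$, which is the claimed uniqueness of $A$. Likewise $\tilde\Theta_2=U\tilde\Theta_1U^{-1}$, which is the claimed uniqueness of $\tilde\Theta$ up to conjugation by a diagonal matrix whose square is the identity.

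The one point that needs care is the observation that orthogonal conjugation preserves the symmetric/antisymmetric splitting. This is what makes the diagonal of $\tilde\Theta$ an invariant, namely the spectrum of $S$, and it drives both the existence and the uniqueness arguments. The rest is standard linear algebra.
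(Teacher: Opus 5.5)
Your proof is correct and follows essentially the same route as the paper. Both arguments first reduce to $A=O\Sigma^{-1/2}$ with $O$ orthogonal and orthogonally diagonalize the symmetric part of $\Sigma^{-1/2}\Theta\Sigma^{1/2}$ (the paper's $\Phi+\Phi^T$); for uniqueness, both show that the orthogonal transition matrix $O_2O_1^T$ (the paper's $C=A_2A_1^{-1}$) preserves $D$, so it is diagonal with entries $\pm1$ when the diagonal entries are distinct. Your commutation argument $UD=DU$ is a compact version of the paper's eigenvector argument $Ce_j=\pm e_j$, and reading ``decreasing'' as non-increasing for existence is a sensible clarification that the paper leaves implicit.
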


\begin{proof}
First of all, since $\Sigma$ is symmetric and positive definite, there
is a unique positive definite choice for $\Sigma^{-1/2}$. Therefore,
for any orthogonal matrix $B$, $A = B\Sigma^{-1/2}$ satisfies $A
\Sigma A^T = B\Sigma^{-1/2} \Sigma (\Sigma^{-1/2})^T B^T = I$.
  
Let $\Phi = \Sigma^{-1/2} \Theta\Sigma^{1/2}$. Then, $\Phi + \Phi^T$
is symmetric, so it is diagonalizable over $\mathbb{R}$ by orthogonal
matrices. Let $B$ be an orthogonal matrix, such that $D = B\left(\Phi
+ \Phi^T\right)B^{-1}$ is diagonal with decreasing entries. Then, $D =
B\Phi B^{-1} + B\Phi^TB^{-1} = B\Phi B^{-1} + (B\Phi B^{-1})^T$. Now,
if we let $A = B\Sigma^{-1/2}$, then we have shown that $A\Sigma A^T =
I$ and $\tilde \Theta = A\Theta A^{-1} = B\Phi B^{-1} =
\frac{1}{2}(\tilde \Theta + \tilde \Theta^T + \tilde \Theta - \tilde
\Theta^T) =\frac{1}{2}(D + \tilde \Theta - \tilde \Theta^T) $, which
is an antisymmetric matrix plus a diagonal matrix with decreasing
diagonal entries.

To prove uniqueness, let $A_1$ and $A_2$ be invertible real matrices
that satisfy the conditions. Let $\tilde \Theta_1 = A_1 \Theta
A_1^{-1}$ and $\tilde \Theta_2 = A_2 \Theta A_2^{-1}$. Suppose the
diagonal elements of $\tilde \Theta_1$ are distinct, we need to show
$\tilde \Theta_1 = \tilde \Theta_2$.

We know $A_1\Sigma A_1^T = I = A_2\Sigma A_2^T$, so $A_1\Sigma^{1/2}$ and $A_2\Sigma^{1/2}$ are both orthogonal. Let $C = A_2A_1^{-1}$, we have
\begin{eqnarray*}
C &=& A_2A_1^{-1} \\ 
&=& A_2\Sigma^{1/2} \Sigma^{-1/2}A_1^{-1}\\
&=& \left(A_2\Sigma^{1/2}\right) \left( A_1\Sigma^{1/2} \right)^{-1}
\end{eqnarray*}
Therefore, $C$ is orthogonal. And we have $\tilde\Theta_2 = C \tilde\Theta_1 C^{-1}$. Since $\tilde\Theta_1 + \tilde\Theta_1^T = D_1 \text{ and } \tilde\Theta_2 + \tilde\Theta_2^T = D_2$, we have
\begin{eqnarray*}
CD_1C^{-1} &=& C\left(\tilde\Theta_1 + \tilde\Theta_1^T\right)C^{-1} \\
&=& C\tilde\Theta_1C^{-1} + C\tilde\Theta_1^TC^{-1}\\
&=&  C\tilde\Theta_1C^{-1} + \left(C\tilde\Theta_1C^{-1}\right)^T\\
&=& \tilde\Theta_2 + \tilde\Theta_2^T\\
&=& D_2
\end{eqnarray*}
Now, suppose we have $D_1x = \lambda x$, then $D_2Cx = CD_1C^{-1} (Cx)
= CD_1x = \lambda Cx$. Therefore, $D_1$ and $D_2$ have the same
eigenvalues, which implies that $D_1 = D_2$.

Suppose $D_1 = D_2$ with $d_1, d_2, \ldots, d_m$ as the diagonal elements, where $d_1 \neq d_2\neq \ldots\neq d_m$. We know that 
$$D_1e_1 = d_1e_1$$
$$D_2Ce_1 = CD_1C^{-1}Ce_1 = CD_1e_1 = Cd_1e_1 = d_1 Ce_1$$ so $Ce_1
\propto e_1$, and since $C$ is orthogonal, this gives $Ce_1=\pm e_1$,
and similarly for $e_j$, $j = 1,2,\ldots,m$, we have $Ce_j =
\pm e_j$. Thus, $C$ is a diagonal matrix with all entries in $\{-1,1\}$. 
\end{proof}

\section{Proof of Theorem 2}
\label{AppB}

\begin{theorem}
  For two-dimensional space, if $\Theta$ and $\hat{\Theta}$ are sums
  of antisymmetric and diagonal matrices with decreasing diagonal
  elements, and with $\Theta_{12}\hat{\Theta}_{12}>0$, then
  $\mathlarger{\inf}_{\stackrel{\Theta_1 \in S_1}{\Theta_2\in S_2}}
  \left\{\left\lVert\Theta_1 -\Theta_2\right\rVert_F\right\} =
  \left\lVert\Theta -\hat\Theta\right\rVert_F$, where $S_1 = \{A\Theta
  A^{T}|AA^T =I\} $ and $S_2 = \{B\hat\Theta B^{T}|BB^T =I\} $.
\end{theorem}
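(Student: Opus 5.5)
The plan is to split the Frobenius distance into a symmetric part and an antisymmetric part and minimize each part separately. Then I check that a single choice of orthogonal matrices attains both minima at once.

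First, reduce to one orthogonal matrix. The Frobenius norm is invariant under $X\mapsto UXU^T$ for orthogonal $U$. Hence
\[
\|A\Theta A^T-B\hat\Theta B^T\|_F=\|(B^TA)\Theta(B^TA)^T-\hat\Theta\|_F ,
\]
so it suffices to show that $\|C\Theta C^T-\hat\Theta\|_F\ge\|\Theta-\hat\Theta\|_F$ for every orthogonal $C$. Equality is then attained at $C=I$, so the infimum is a minimum.

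Next, decompose. Write $\Theta=D+bJ$ and $\hat\Theta=\hat D+\hat bJ$, where $J=\begin{pmatrix}0&1\\-1&0\end{pmatrix}$, $b=\Theta_{12}$, $\hat b=\hat\Theta_{12}$, and $D=\mathrm{diag}(d_1,d_2)$, $\hat D=\mathrm{diag}(\hat d_1,\hat d_2)$ with $d_1\ge d_2$ and $\hat d_1\ge\hat d_2$. Symmetric and antisymmetric matrices are orthogonal in the Frobenius inner product $\langle X,Y\rangle=\mathrm{tr}(X^TY)$. Conjugation by $C$ preserves symmetry and antisymmetry. Therefore
\[
\|C\Theta C^T-\hat\Theta\|_F^2=\|CDC^T-\hat D\|_F^2+\|b\,CJC^T-\hat bJ\|_F^2 .
\]

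For the antisymmetric term, use that in two dimensions every orthogonal $C$ satisfies $CJC^T=\det(C)J$. A direct check suffices: rotations commute with $J$, and a reflection anticommutes with it. So this term equals $2(\det(C)b-\hat b)^2$. Since $b\hat b>0$, it is minimized exactly when $\det C=1$, with value $2(b-\hat b)^2$.

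For the symmetric term, expand
\[
\|CDC^T-\hat D\|_F^2=\|D\|_F^2+\|\hat D\|_F^2-2\,\mathrm{tr}(CDC^T\hat D).
\]
By the von Neumann trace inequality (equivalently, Hoffman--Wielandt for symmetric matrices), $\mathrm{tr}(CDC^T\hat D)\le d_1\hat d_1+d_2\hat d_2$. This holds because both eigenvalue lists are sorted decreasingly. In the $2\times2$ case it can also be checked by hand: with $C$ a rotation by $\phi$, the trace is $d_1\hat d_1+d_2\hat d_2-(d_1-d_2)(\hat d_1-\hat d_2)\sin^2\phi$. The inequality gives the lower bound $\|D-\hat D\|_F^2$, attained at $C=I$.

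Finally, $C=I$ has determinant $1$, so it minimizes both terms simultaneously. The minimum of the sum is therefore $\|D-\hat D\|_F^2+2(b-\hat b)^2=\|\Theta-\hat\Theta\|_F^2$.

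The main obstacle is the symmetric-part inequality, because it is where the decreasing-diagonal constraint is used. I would give the explicit rotation computation above to keep the proof self-contained. I would also treat reflections separately, writing $C=R_\phi\,\mathrm{diag}(1,-1)$. Since $\mathrm{diag}(1,-1)$ commutes with $D$, the symmetric term for a reflection reduces to the rotation case, while the antisymmetric term is strictly larger because $b\hat b>0$.
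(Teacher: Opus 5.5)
Your proof is correct, and it takes a different route from the paper's. The paper parametrizes $B$ as a rotation or reflection by an angle $\phi$ and differentiates $\lVert\Theta-B\hat\Theta B^T\rVert_F^2$ with respect to $\phi$. It locates the critical points at $\sin\phi\cos\phi=0$ and then compares by hand the four candidates these produce: $\hat\Theta$, the sign-flipped off-diagonal, the swapped diagonal, and both changes together.

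You instead use the Frobenius-orthogonality of symmetric and antisymmetric matrices to split the objective into two independent pieces. The identity $CJC^T=\det(C)J$ settles the antisymmetric piece. The trace inequality, or your closed form $d_1\hat d_1+d_2\hat d_2-(d_1-d_2)(\hat d_1-\hat d_2)\sin^2\phi$, settles the symmetric piece.

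The underlying algebra is the same: the paper's expression for $\operatorname{tr}(\Theta^TB\hat\Theta B^T)$ is exactly your symmetric cross term plus $\pm2\theta_{12}\hat\theta_{12}$. Your organization still buys several things:
\begin{itemize}
\item It replaces calculus with a direct inequality. You need no implicit appeal to compactness of the orthogonal group to place the infimum at a critical point.
\item It makes visible which hypothesis does what. The decreasing-diagonal condition controls only the symmetric part, while $\Theta_{12}\hat\Theta_{12}>0$ controls only the orientation $\det C$.
\item It handles weakly decreasing diagonals without special treatment. In that case the paper's derivative vanishes identically, so its critical-point step would need an extra remark.
\item It yields at once the companion fact used in the main text. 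When $\Theta_{12}\hat\Theta_{12}<0$, the choice $C=\mathrm{diag}(1,-1)$ is optimal and turns $\hat\Theta$ into $\hat\Theta^T$.
\end{itemize}

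The separation into independently minimizable pieces is specific to dimension two. Only there is the antisymmetric part one-dimensional and preserved up to sign by every orthogonal conjugation. Your von Neumann bound on the symmetric part, however, holds in any dimension.
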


\begin{proof}
Since the Frobenius norm is invariant under conjugation, for any
$\Theta_1 = A\Theta A^{T}$ and $\Theta_2=B\hat\Theta B^T$, where $A$
and $B$ are orthogonal, we have $
\left\lVert\Theta_1-\Theta_2\right\rVert_F=\left\lVert
A^T(\Theta_1-\Theta_2)A\right\rVert_F=\left\lVert\Theta-A^TB\hat\Theta
B^TA\right\rVert_F$, so it is sufficient to show that
$\mathlarger{\inf}_{\Theta_2\in S_2} \left\{\left\lVert\Theta
-\Theta_2\right\rVert_F\right\}=\left\lVert\Theta
-\hat\Theta\right\rVert_F$

In two dimensions, we have
$B = \begin{pmatrix} \cos\phi & \sin\phi \\ -\sin\phi & \cos\phi
\end{pmatrix}$
or
$B = \begin{pmatrix} \cos\phi & \sin\phi \\ \sin\phi & -\cos\phi
\end{pmatrix}$
for some $\phi$. We therefore want to show that if $\Theta$ and
$\hat\Theta$ are sums of antisymmetric and diagonal matrices with
decreasing diagonal elements, and
$\Theta_{12}\hat{\Theta}_{12}\geqslant 0$ then
$$\mathlarger{\inf}_{\phi} \left\{\left\lVert\Theta -\begin{pmatrix}
\cos\phi & \sin\phi \\
-\sin\phi  & \cos\phi 
\end{pmatrix}\hat\Theta\begin{pmatrix}
\cos\phi & -\sin\phi \\
\sin\phi  & \cos\phi 
\end{pmatrix}\right\rVert_F,
\left\lVert\Theta -\begin{pmatrix}
\cos\phi & \sin\phi \\
\sin\phi  & -\cos\phi 
\end{pmatrix}\hat\Theta\begin{pmatrix}
\cos\phi & \sin\phi \\
\sin\phi  & -\cos\phi 
\end{pmatrix}\right\rVert_F\right\}$$
$$=\left\lVert\Theta -\hat\Theta\right\rVert_F$$

Let $\Theta = \begin{pmatrix}
\theta_{11} & \theta_{12} \\
-\theta_{12}  & \theta_{22} 
\end{pmatrix}$ and $\hat\Theta = \begin{pmatrix}
\hat\theta_{11} & \hat\theta_{12} \\
-\hat\theta_{12}  & \hat\theta_{22} 
\end{pmatrix}$. Since $\left\lVert A\right\rVert_F^2 =
\operatorname{tr}(A^TA)$, we have 
\begin{align*}
\dfrac{d}{d\phi}\left\lVert\Theta -B \hat\Theta B^T\right\rVert_F^2 &= \dfrac{d}{d\phi}\operatorname{tr}\left\{\left(\Theta-B\hat\Theta B^T\right)^T\left(\Theta-B\hat\Theta B^T\right)\right\}\\
&= \dfrac{d}{d\phi}\operatorname{tr}\left\{\Theta^T\Theta - \Theta^T B\hat\Theta B^T - \left(\Theta^T B\hat\Theta B^T\right)^T + \left(B\hat\Theta B^T\right)^T\left(B\hat\Theta B^T\right)\right\}\\
&= -2\dfrac{d}{d\phi}\operatorname{tr}\left(\Theta^T B\hat\Theta B^T\right) \\
&= -2\dfrac{d}{d\phi}\sum_{i,j,k,l=1}^2 \Theta_{ji}\hat\Theta_{kl}B_{jk}B_{il} \\
\end{align*}

For
$B = \begin{pmatrix} \cos\phi & \sin\phi \\ -\sin\phi & \cos\phi
\end{pmatrix}$, we have 
$$\sum_{i,j,k,l=1}^2
\Theta_{ji}\hat\Theta_{kl}B_{jk}B_{il}=\left(\hat\theta_{11}\theta_{11}+\hat\theta_{22}\theta_{22}\right)\cos^2\phi
+ \left(\hat\theta_{22}\theta_{11}+\hat\theta_{11}\theta_{22}\right)\sin^2\phi +2\hat\theta_{12}\theta_{12}$$
while for
$B = \begin{pmatrix} \cos\phi & \sin\phi \\ \sin\phi & -\cos\phi
\end{pmatrix}$, we have 
$$\sum_{i,j,k,l=1}^2
\Theta_{ji}\hat\Theta_{kl}B_{jk}B_{il}=\left(\hat\theta_{11}\theta_{11}+\hat\theta_{22}\theta_{22}\right)\cos^2\phi+\left(\hat\theta_{22}\theta_{11}+\hat\theta_{11}\theta_{22}\right)\sin^2\phi -2\hat\theta_{12}\theta_{12}$$
In either case
\begin{align*}
\dfrac{d}{d\phi}\left\lVert\Theta -B \hat\Theta B^T\right\rVert_F^2 &=
-2\dfrac{d}{d\phi}\left(\left(\hat\theta_{11}\theta_{11}+\hat\theta_{22}\theta_{22}\right)\cos^2\phi+\left(\hat\theta_{22}\theta_{11}+\hat\theta_{11}\theta_{22}\right)\sin^2\phi\right) \\
&= 4\sin\phi\cos\phi\left(
\hat\theta_{22}\theta_{11}+\hat\theta_{11}\theta_{22}
-\hat\theta_{11}\theta_{11}-\hat\theta_{22}\theta_{22}
\right)
\end{align*}

To find the infimum, we set the derivative with respect to $\phi$ to
0. This gives us $\sin\phi = 0$ or $\cos\phi = 0$. This gives
$$B\in\left\{
\pm\begin{pmatrix}
1 & 0 \\
0  & 1
\end{pmatrix},
\pm\begin{pmatrix}
0 & 1 \\
-1  & 0 
\end{pmatrix},
\pm\begin{pmatrix}
1 & 0 \\
0  & -1
\end{pmatrix},
\right\}$$
This gives 
$$B\hat{\Theta}B^T\in\left\{
\begin{pmatrix}
\hat{\theta}_{11} & \hat{\theta}_{12} \\
-\hat{\theta}_{12}  & \hat{\theta}_{22}
\end{pmatrix},
\begin{pmatrix}
\hat{\theta}_{11} & -\hat{\theta}_{12} \\
\hat{\theta}_{12}  & \hat{\theta}_{22}
\end{pmatrix},
\begin{pmatrix}
\hat{\theta}_{22} & \hat{\theta}_{12} \\
-\hat{\theta}_{12}  & \hat{\theta}_{11}
\end{pmatrix},
\begin{pmatrix}
\hat{\theta}_{22} & -\hat{\theta}_{12} \\
\hat{\theta}_{12}  & \hat{\theta}_{11}
\end{pmatrix}
\right\}$$
For these matrices $B$, the corresponding squared Frobenius norms
$\lVert \Theta-B^T\hat{\Theta}B\rVert^2$ are
\begin{align*}
(\theta_{11}-\hat{\theta}_{11})^2+(\theta_{22}-\hat{\theta}_{22})^2+2(\theta_{12}-\hat{\theta}_{12})^2\\
(\theta_{11}-\hat{\theta}_{11})^2+(\theta_{22}-\hat{\theta}_{22})^2+2(\theta_{12}+\hat{\theta}_{12})^2\\
(\theta_{11}-\hat{\theta}_{22})^2+(\theta_{22}-\hat{\theta}_{11})^2+2(\theta_{12}-\hat{\theta}_{12})^2\\
  (\theta_{11}-\hat{\theta}_{22})^2+(\theta_{22}-\hat{\theta}_{11})^2+2(\theta_{12}+\hat{\theta}_{12})^2
\end{align*}
Since $\theta_{11}>\theta_{22}$, $\hat{\theta}_{11}>\hat{\theta}_{22}$
and $\theta_{12}\hat{\theta}_{12}>0$, it is easy to see that the first
of these is the smallest.
\end{proof}

\newpage
\section{Simulation Setup}
\label{AppC}
\begin{center}

\begin{table}[h]
\scriptsize
\centering
\begin{tabular}{m{4cm} m{8cm}}
\hline\hline
\multicolumn{2}{c} {Z Matrix} \\\hline\hline
 Value & Note  \\\hline
 \hfill\break
$\begin{bmatrix}
0.7 & 0.8 \\
0.1 & -0.1 
\end{bmatrix}$ \hfill\break & Two dimensional output equation \hfill \break Two vectors have small angle\\\hline
 \hfill\break
$\begin{bmatrix}
0.1 & 0.2 \\
0.8 & -0.1\\
4 & 1.5  
\end{bmatrix}$ \hfill\break & Three dimensional output equation \hfill \break Two vectors have small angle\\\hline
 \hfill\break
$\begin{bmatrix}
0.1 & 0.2 \\
0.8 & -0.1\\
4 & 1.5 \\
-0.5 & -0.2 
\end{bmatrix}$ \hfill\break & Four dimensional output equation \hfill \break Two vectors have small angle\\\hline
 \hfill\break
$\begin{bmatrix}
0.2 & 0.5 \\
0.5 & -0.2 
\end{bmatrix}$ \hfill\break & Two dimensional output equation \hfill \break Two vectors are orthogonal\\\hline
 \hfill\break
$\begin{bmatrix}
0.1 & 0.2 \\
0.8 & -0.1\\
-0.2 & -0.3 
\end{bmatrix}$ \hfill\break & Three dimensional output equation \hfill \break Two vectors are orthogonal\\\hline
 \hfill\break
$\begin{bmatrix}
0.1 & 0.2 \\
0.8 & -0.1\\
-0.2 & -0.9 \\
0.6 & -0.2
\end{bmatrix}$ \hfill\break & Four dimensional output equation \hfill \break Two vectors are orthogonal\\\hline
\end{tabular}
\caption{Six different simulation scenarios of $Z$.}
\label{Zsim}
\end{table}
\end{center}

\begin{center}
\begin{table}[H]
\scriptsize
\centering
\begin{tabular}{m{4cm} m{8cm}}
\hline\hline
\multicolumn{2}{c} {$\Theta$ Matrix} \\\hline\hline
 Value & Note  \\\hline
$\begin{pmatrix}
0.16 & 0.02 \\
-0.02 & 0.1 
\end{pmatrix}$ & Two real roots \hfill \break Diagonal elements are close and small \hfill \break
$(\lambda_1 - \lambda_2)^2$ small
\\\hline
$\begin{bmatrix}
0.8 & 0.3 \\
-0.3 & 0.1 
\end{bmatrix}$ & Two real roots \hfill \break Diagonal elements are far and small\hfill \break
$(\lambda_1 - \lambda_2)^2$ small\\\hline
$\begin{bmatrix}
2.4 & 0.1 \\
-0.1 & 2 
\end{bmatrix}$ & Two real roots \hfill \break Diagonal elements are close and large\hfill \break
$(\lambda_1 - \lambda_2)^2$ small\\\hline
$\begin{bmatrix}
2.4 & 0.5 \\
-0.5 & 1 
\end{bmatrix}$ & Two real roots \hfill \break Diagonal elements are far and large\hfill \break
$(\lambda_1 - \lambda_2)^2$ small\\\hline
$\begin{bmatrix}
0.16 & 0.04 \\
-0.04 & 0.1 
\end{bmatrix}$ & Two complex roots, small imaginary part \hfill \break Diagonal elements are close and small\hfill \break
$(\lambda_1 - \lambda_2)^2$ small\\\hline
$\begin{bmatrix}
0.8 & 0.4 \\
-0.4 & 0.1 
\end{bmatrix}$ & Two complex roots, small imaginary part \hfill \break Diagonal elements are far and small\hfill \break
$(\lambda_1 - \lambda_2)^2$ small\\\hline
$\begin{bmatrix}
2.4 & 0.25 \\
-0.25 & 2 
\end{bmatrix}$ & Two complex roots, small imaginary part \hfill \break Diagonal elements are close and large\hfill \break
$(\lambda_1 - \lambda_2)^2$ small\\\hline
$\begin{bmatrix}
2.4 & 0.75 \\
-0.75 & 1 
\end{bmatrix}$ & Two complex roots, small imaginary part \hfill \break Diagonal elements are far and large\hfill \break
$(\lambda_1 - \lambda_2)^2$ small\\\hline
$\begin{bmatrix}
0.16 & 1 \\
-1 & 0.1 
\end{bmatrix}$ & Two complex roots, large imaginary part \hfill \break Diagonal elements are close and small\hfill \break
$(\lambda_1 - \lambda_2)^2$ large\\\hline
\end{tabular}
\caption{Fifteen different simulation scenarios of $\Theta$ (Part 1).}
\label{Thetasim}
\end{table}
\end{center}

\begin{center}
\begin{table}[h]
\scriptsize
\centering
\begin{tabular}{m{4cm} m{8cm}}
\hline\hline
\multicolumn{2}{c} {$\Theta$ Matrix} \\\hline\hline
 Value & Note  \\\hline
 $\begin{bmatrix}
0.8 & 1.8 \\
-1.8 & 0.1 
\end{bmatrix}$ & Two complex roots, large imaginary part \hfill \break Diagonal elements are far and small\hfill \break
$(\lambda_1 - \lambda_2)^2$ large\\\hline
$\begin{bmatrix}
2.4 & 2 \\
-2 & 2 
\end{bmatrix}$ & Two complex roots, large imaginary part \hfill \break Diagonal elements are close and large\hfill \break
$(\lambda_1 - \lambda_2)^2$ large\\\hline
$\begin{bmatrix}
2.4 & 2 \\
-2 & 1 
\end{bmatrix}$ & Two complex roots, large imaginary part \hfill \break Diagonal elements are far and large\hfill \break
$(\lambda_1 - \lambda_2)^2$ large\\\hline
$\begin{bmatrix}
6 & 2 \\
-2 & 1.8 
\end{bmatrix}$ & Two real roots, large imaginary part \hfill \break Diagonal elements are far and large\hfill \break
$(\lambda_1 - \lambda_2)^2$ large\\\hline
$\begin{bmatrix}
6 & 1.5 \\
-1.5 & 1.8 
\end{bmatrix}$ & Two real roots, large imaginary part \hfill \break Diagonal elements are far and large\hfill \break
$(\lambda_1 - \lambda_2)^2$ large\\\hline
$\begin{bmatrix}
6 & 0.5 \\
-0.5 & 1.8 
\end{bmatrix}$ & Two real roots, large imaginary part \hfill \break Diagonal elements are far and large\hfill \break
$(\lambda_1 - \lambda_2)^2$ large\\\hline
\end{tabular}
\caption{Fifteen different simulation scenarios of $\Theta$ (Part 2).}
\label{Thetasim2}
\end{table}
\end{center}

\newpage
\section{BIC of Simulated Data}
\label{AppD}
\begin{figure}[h]
\centering
\includegraphics[width=\textwidth]{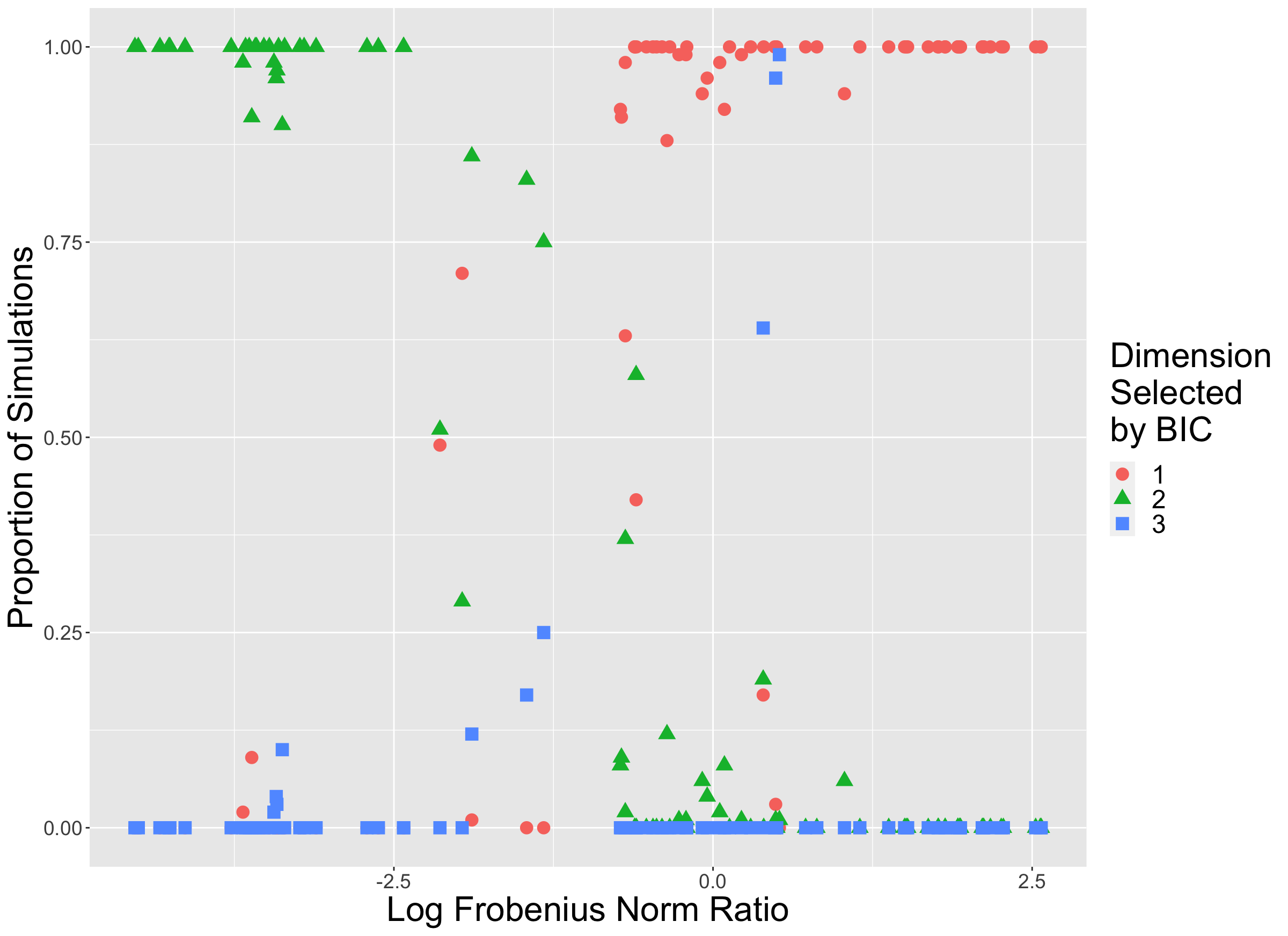}
\caption{Proportion of times BIC selects different dimensions for the state equation versus the average Logarithm of Frobenius norm ratio between $\exp{(-\hat{\Theta})}$ and true $\exp{(-\Theta})$ value over 100 simulations for each simulation scenario. }
\label{BIC}
\end{figure}

\end{document}